\newtheorem{theorem}{Theorem}[section]
\newtheorem{lemma}[theorem]{Lemma}
\newtheorem{definition}[theorem]{Definition}
\newcommand{\app}{\,\texttt{++}\,}
\newcommand{\length}{\texttt{length}\,}
\newcommand{\DownarrowA}{\Downarrow_\mathcal{A}}
\newcommand{\DownarrowC}{\Downarrow_\mathcal{C}}
\newcommand{\NdetDownarrowA}{\Downarrow_\mathcal{A}^\star}
\newcommand{\leak}{\mathtt{Leak}\,}
\newcommand{\leaknoarg}{\mathtt{Leak}}
\newcommand{\branch}{\mathtt{CompNonDet}\,}
\newcommand{\branchnoarg}{\mathtt{CompNonDet}}
\newcommand{\treeleak}[2]{\mathtt{TreeLeak}\,#1\,#2}
\newcommand{\treeleaf}{\mathtt{TreeLeaf}}
\newcommand{\treebranch}[1]{\mathtt{TreeBranch}\,#1}
\newcommand{\bigtreeleak}[2]{\mathtt{BigTreeLeak}\,#1\,#2}
\newcommand{\bigtreeleaf}{\mathtt{BigTreeLeaf}}
\newcommand{\bigtreebranch}[1]{\mathtt{BigTreeBranch}\,#1}
\newcommand{\en}{\mathtt{End}}
\begin{document}

\title{Smooth, Integrated Proofs of Cryptographic Constant Time for Nondeterministic Programs and Compilers}

\author{Owen Conoly}
\orcid{0009-0003-3129-1218}
\affiliation{%
  \institution{Massachusetts Institute of Technology}
  \city{Cambridge}
  \country{USA}
}
\email{owenc@mit.edu}

\author{Andres Erbsen}
\orcid{0000-0002-9854-7500}
\affiliation{%
  \institution{Google}
  \city{Cambridge}
  \country{USA}
}
\email{andreser@mit.edu}

\author{Adam Chlipala}
\orcid{0000-0001-7085-9417}
\affiliation{%
  \institution{Massachusetts Institute of Technology}
  \city{Cambridge}
  \country{USA}
}
\email{adamc@csail.mit.edu}

%%% The following is specific to PLDI '25 and the paper
%%% 'Smooth, Integrated Proofs of Cryptographic Constant Time for Nondeterministic Programs and Compilers'
%%% by Owen Conoly, Andres Erbsen, and Adam Chlipala.
%%%
\setcopyright{cc}
\setcctype{by}
\acmDOI{10.1145/3729318}
\acmYear{2025}
\acmJournal{PACMPL}
\acmVolume{9}
\acmNumber{PLDI}
\acmArticle{215}
\acmMonth{6}
\received{2024-11-14}
\received[accepted]{2025-03-06}

%% Use to set concise author list for page headers
%\renewcommand{\shortauthors}{HERE}

\begin{abstract}
Formal verification of software and compilers has been used to rule out large classes of security-critical issues, but risk of unintentional information leakage has received much less consideration.
It is a key requirement for formal specifications to leave some details of a system's behavior unspecified so that future implementation changes can be accommodated, and yet it is nonetheless expected that these choices would not be made based on confidential information the system handles.
This paper formalizes that notion using omnisemantics and plain single-copy assertions, giving for the first time a specification of what it means for a nondeterministic program to be constant-time or more generally to avoid leaking (a part of) its inputs.
We use this theory to prove data-leak-free execution of core cryptographic routines compiled from Bedrock2 C to RISC-V machine code, showing that the smooth specification and proof experience omnisemantics provides for nondeterminism extends to constant-time properties in the same setting.
We also study variants of the key program-compiler contract, highlighting pitfalls of tempting simplifications and subtle consequences of how inputs to nondeterministic choices are constrained.
Our results are backed by modular program-logic and compiler-correctness theorems, and they integrate into a neat end-to-end theorem in the Coq proof assistant.
\end{abstract}

\begin{CCSXML}
<ccs2012>
   <concept>
       <concept_id>10011007.10011074.10011099.10011692</concept_id>
       <concept_desc>Software and its engineering~Formal software verification</concept_desc>
       <concept_significance>500</concept_significance>
       </concept>
   <concept>
       <concept_id>10011007.10011006.10011041</concept_id>
       <concept_desc>Software and its engineering~Compilers</concept_desc>
       <concept_significance>500</concept_significance>
       </concept>
   <concept>
       <concept_id>10002978.10003022</concept_id>
       <concept_desc>Security and privacy~Software and application security</concept_desc>
       <concept_significance>500</concept_significance>
       </concept>
 </ccs2012>
\end{CCSXML}

\ccsdesc[500]{Software and its engineering~Formal software verification}
\ccsdesc[500]{Software and its engineering~Compilers}
\ccsdesc[500]{Security and privacy~Software and application security}

\keywords{cryptographic constant time, compiler verification, omnisemantics}

\maketitle

\section{Introduction}
%-------------------------------------------------------------------------------

Cryptographic software relies on compiler correctness beyond what is captured by a conventional trace-inclusion or specification-preservation result.
The crux is that while a number of implementation choices are intentionally left to the compiler, it would not be fair for the compiler to generate code that makes these choices based on examination of confidential data handled by the program.
This consideration applies both to data flows that are commonly modeled in compiler-correctness theorems---for example, when iteration order of collections depends on addresses of stack-allocated objects---and also to ``side'' channels that are often omitted, for example utilization of CPU resources.
In either case, it would be unreasonably burdensome for the source-language semantics to specify these details fully, and actually preserving them in the compiled code would severely limit implementation flexibility, completely prohibiting optimization if this idea is taken literally.

Past study~\cite{CtCompCert} of confidentiality-preserving compilation in the context of CompCert~\cite{Leroy-backend} leaves these issues out-of-scope: ``we assume that the languages are deterministic,'' ``it does not seem straightforward what it would mean for a non-deterministic C program to be constant-time.''
Impressive progress (porting of 17 out of 20 compiler passes) was possible in spite of leaving these fundamental questions unanswered because CompCert's proof strategy already relies on an intricate deterministic memory model.
%For example, pointers are represented by opaque tokens, triggering undefined behavior on constructs that depend on their addresses such as \verb|printf("%p", &x)|.
%Sophisticated simulation lemmas enabled verification of compilation of a large subset of CompCert C through an impressive number of compiler passes, including some memory-access-modifying ones, but this past work left the conceptual challenge of nondeterminism open.

We present our development of a new generalization of cryptographic constant time to support nondeterminism.
We implement and evaluate this generalization through a full verified software stack in Coq, including program-logic proofs of C-like programs and verified compilation to machine code, with derived timing-independence theorems at the machine-code level.
What is most distinctive about our approach is recognition that rules about which program inputs may influence which nondeterministic choices are inherently program-specific, and so we need an expressive specification formalism that pairs well with a compiler proved to preserve specifications.
We study several candidate formalisms, summarize their pros and cons, and relate them formally.

\subsection{Constant-Time Programming}
Let us begin by reviewing the established programming discipline known as \emph{cryptographic constant time}. % , or just constant-time code.
First, despite the name, this discipline enforces a data-flow property, not a statement about measured or asymptotic execution time. % in any particular setting.
The vast majority of important cryptographic implementations follow this discipline to avoid leaking secrets (e.g. secret keys or plaintexts) through side channels such as timing of program execution, cache occupancy, or CPU resource usage.
The same discipline can also be captured using straightforward operational rules specifying information leaked by execution steps.
The standard approach that we follow is to apply an instrumented semantics that, as a program runs, accumulates a \emph{leakage trace} of the following kinds of events: read memory at address $a$, wrote memory at address $a$, or resolved a conditional jump (given as a test expression's Boolean value).
Note that we log memory \emph{addresses} but not the values read/written, and that the branch decisions being leaked allows control flow to be reconstructed from the program and its leakage trace.
If a deterministic program's input is partitioned into secret and public components, the program is said to be constant-time if, on any two executions \emph{that agree on values of all public inputs}, those two executions \emph{generate identical leakage traces}.
It is apparent that constant time generalizes noninterference, where we consider leakage traces as implicit public outputs.

This condition composes well with the vast majority of important compiler and hardware optimizations, with optimizations that are exceptions to this pattern becoming known as microarchitectural side-channel vulnerabilities.
For instance, all common systems of memory caching retain secret-independent timing with constant-time programs if configured appropriately, since their ``control flow'' (e.g., interactions between caches in a cache-coherence protocol) only changes based on memory addresses accessed, not values read/written.
However, it is apparent that the definition we just presented no longer captures the desired concept in the presence of nondeterminism.
Calling a function twice on \emph{the same} arguments may yield different leakage traces, so we have little hope that calls with different arguments will always generate identical traces.
Yet programs expected to be side channel-secure often \emph{do} depend on nondeterminism, whether as a consequence of engineering abstraction (e.g. memory-management details are left unspecified at the source level) or inherently through I/O (e.g. to implement cryptographic protocols, generate randomness, or access long-term keys).
Key-and-input-dependent variation in response (output) time has been exploited to perform padding-oracle attacks even against systems that do not deliberately disclose information about unauthenticated decryption results~\cite{Lucky13}, and constant-time techniques to avoid this issue rely on stack-allocated temporary buffers to store sensitive intermediate values.\footnote{For example, see BoringSSL function \texttt{EVP\_tls\_cbc\_copy\_mac}.}

\subsection{Our Semantics Approach in One Page}\label{semapproach}

All source-language and machine-code programs in this paper will be analyzed in terms of which specifications they admit in the sense of total-correctness Hoare logic or omnisemantics~\cite{Omnisemantics} (a variation of weakest-precondition calculus that we explain more as we reference aspects of it later).
In symbols, $p \Downarrow Q$ means that program $p$ (which we in this section treat as including all relevant initial state) terminates in a state satisfying the predicate $Q$, and it does so \emph{regardless of the outcomes of nondeterministic choices} during its execution.
Note that the set $Q$ is just a constraint on the possible behaviors, and there is no requirement that every outcome in $Q$ is even possible, let alone as likely to occur as others.
Thus, we allow compilers to shrink $Q$, perhaps by compiling a nondeterministic source program to a deterministic machine-code program.

Side-channel leakage is encoded as yet another component of the state of a program, and $Q$ is asserted on it upon termination.
For specific programs $p$, we pick concrete specifications $Q$ that relate the runtime input-output log $t$ of each execution to its leakage trace $k$.
For example, saying that a username $u$ may be leaked but a passphrase $s$ may not: $\exists f. \; p \Downarrow \{ (t, k) \; | \; \exists u, s.\; t = [\verb|IN | u; \verb|IN | s] \wedge k = f(u) \}$.
Here the I/O log $t$ is constrained to record inputting just $u$ and $s$, in that order.
Critically, quantifier ordering requires the program to satisfy this specification using one and the same leakage-prediction function $f$ for all passphrases $s$.
If $f$ were existentially quantified after $s$, it could be the constant function that returns $s$, allowing the passphrase to be leaked.
% Careful scope and binding management will play a significant role in specifications throughout this paper as we explore different ways to formulate $Q$ for source-language and machine-code programs.

The example just presented shows how to encode a leakage specification as a ``single-copy'' property: a postcondition that all possible executions must satisfy individually.
The ability to capture properties that might also be stated in a ``two-copy,'' hyperproperty style (e.g., saying that for every pair of possible executions, their leakage traces are related) arises from the ability to share quantifiers between the postconditions of the possible executions: we can say that there must exist a reference leakage trace that every possible leakage trace is related to.
For example, the example above says that every leakage trace is related to the same function \(f\).
Encoding leakage specifications as single-copy rather than two-copy properties greatly simplifies our formalization.

We formalize compiler correctness in the style of specification preservation.
Specifically, an optimization pass $C$ that does not change the language or the possible leakage must satisfy $\forall p, Q. \; p\Downarrow Q \implies C(p)\Downarrow Q$.
Compiler passes that change the language or the program state (including leakage) would use different variants of $\Downarrow$ and $Q$ in the conclusion.
A common pattern is to define the lower-level postcondition $Q'$ in terms of a state-representation relation $R$ and the high-level postcondition $Q$.
For example, $Q'$ may assert that there exists a machine word that is stored in a calling convention-specified register and also is a permissible return value of the function according to the high-level postcondition $Q$.
Similarly, the leakage specification of a simple compiler pass can state that every leakage trace of the compiled program equals the output of some (global) function $g$ on some leakage trace of the source program:
$\forall p, Q.\; \exists g.\; p\Downarrow Q \implies C(p)\Downarrow
\{ (t, g(k)) \mid Q(t, k) \}$.

The situation is more complicated when program control flow can depend on nondeterministic choices that may be resolved by the compiler, perhaps in a way that depends on some part of the program's state but not another.
We will now describe scenarios where these considerations come up and develop ways to specify how we would want a compiler to handle them.
%The sketch so far will serve as a guide for all solutions we propose, and the solution we eventually recommend will deviate minimally from it, but instructive intermediates we consider require substantial theory.

The end point of our journey will be a combined theory including:
\begin{itemize}
\item The \textbf{first generalization of cryptographic constant time to support nondeterminism},
\item that can specify \textbf{which nondeterministic choices may be resolved based on secrets},
\item which we then use for \textbf{verification of multipass compilers that maintain constant time as they reduce some but not all nondeterminism} as we progress through compilation,
\item and to give \textbf{compiler proof-compatible Hoare-logic certificates to specific C programs representative of constant time-programming challenges},
\item all evaluated through \textbf{a Coq development}.
\end{itemize}

\subsection{Overview of the Paper}
\cref{sec:det} describes the basic idea of our approach: to formalize ``program \(p\) is constant-time'' as a single-copy property, just say ``the leakage trace of \(p\) is a function of public inputs.''
\cref{stack_swap_bad} notes that this approach runs into problems when what we call \emph{compiler-resolved nondeterminism} is involved, but \cref{fix_stack_swap} presents a simple solution: get rid of the problematic nondeterminism by parameterizing the source-level semantics over an \emph{oracle}.
\par It is natural to ask whether our approach can be made to work without getting rid of the source-level nondeterminism.
In \cref{sec:two}, we show that the answer is yes: we provide an alternative approach without determinizing the source-level semantics, and then we prove it equivalent (\cref{equiv}) to the semantics-determinization approach.
In \cref{predictors}, we show again (in a different, subtly nonequivalent way) that the answer is yes, this time using the concept of a \emph{predictor}, which is a dual concept to the oracles used in \cref{sec:determinize} and \cref{sec:two}.
\par In \cref{sec:compiler}, we shift our focus to compiler specifications.
In \cref{leakage_trans_defn}, we propose a simple definition of ``constant time-preserving'' for a compiler: the compiler should admit \emph{leakage-transformation functions}.
\cref{ltf_not_enough} observes that requiring a compiler to admit leakage-transformation functions is insufficient when compiler-resolved nondeterminism is involved.
As a fix, we propose two stronger compiler specifications: \cref{ctp5} basically says that constant-time specifications in the style of \cref{sec:determinize} (or, equivalently, \cref{sec:two}) are preserved by the compiler, while \cref{ctp3} basically says that constant-time specifications in the style of \cref{predictors} are preserved.
It turns out that \cref{ctp5} and \cref{ctp3} are inequivalent in interesting ways; \cref{sec:compiler} concludes with a discussion of the tradeoffs between them.
\par We conclude with case studies.
\cref{compiler_proof} describes our strategies for proving that the Bedrock2 compiler is constant time-preserving, and \cref{case_studies} describes our experience proving some Bedrock2 source programs correct and composing source-program theorems with the compiler theorem.

In total, the paper presents three different approaches to verifying constant time for source programs (and three corresponding compiler specifications): semantics determinization via oracles (\cref{fix_stack_swap}); an equivalent approach, still with oracles but without semantics determinization (\cref{sec:two}); and the approach with predictors (\cref{predictors}).
% A discussion of the tradeoffs between the inequivalent approaches appears at the end of \cref{sec:compiler}.
% All three approaches were workable for source program and compiler proofs.

Our implementation is available as open source: compiler and source-program proofs in the semantics-determinization style appear in the main repository for Bedrock2\footnote{\url{https://github.com/mit-plv/bedrock2}}.
Proofs of the Bedrock2 compiler in all three styles are implemented in the artifact associated with this paper.
Also, every fact presented in this paper as a theorem, lemma, or corollary is formalized in a separate repository\footnote{\url{https://github.com/OwenConoly/semantics_relations}}.

\section{The Languages Used in This Paper}
  
\subsection{Background on Bedrock2 (Our Source Language)}\label{sec:background}

Bedrock2 is an imperative language inspired by K\&R~C~\cite{KRC88}.
It has been used to create \emph{end-to-end Coq proofs} about: a complete software-hardware IoT device controlling a lightbulb based on network inputs~\cite{lightbulb}, a cryptographic server running on a microcontroller~\cite{garagedoor}, and a selection of data structures~\cite{live}.
As Bedrock2 has previously been used for ``one Q.E.D.'' proofs that span the abstraction gap from relational specifications of applications down to hardware designs that run on FPGAs, we considered it a compelling baseline to extend in this work, to show that our new style of reasoning about timing channels is likely to apply to deep verified stacks, as well.

Syntax includes expressions and commands, with all variables storing machine words.
Imperative state includes locals values $\ell$ and byte-granularity memory $m$.
The state of the program also contains an input-output log $t$, preserved by compilation; and our addition, the leakage trace $k$.

Bedrock2 language features include functions, loops, conditionals, stack allocation, and I/O.
There are no restrictions on pointer operations; pointers are simply machine words indexing into one flat memory space.
They can be added, compared, printed, etc.
Pointer-based data structures (e.g. linked lists) can be implemented in Bedrock2 as in C.
Furthermore, Bedrock2's lower-level semantics simplify verification of memory-management code, as demonstrated by past work~\cite{live}.

In this extension of Bedrock2 to handle constant time, we adapt the source-language semantics and compiler proof, but we leave for future work adaptation of processor proofs.
As a result, our final theorems are about RISC-V machine code instead of hardware.
%Thus, unlike in the first case study with Bedrock2~\cite{lightbulb}, we do have to trust machine-code semantics (including the semantics of binary-level leakage traces).
However, machine-code semantics is our only trusted code; we do not trust the Bedrock2 source language or compiler.

It is also worth brief description of other complexities of the Bedrock2 compiler that we inherit.
It works with not just the source language and RISC-V machine language but also one intermediate language flattened into three-address code.
There are six phases: flattening, compilation of immediates, dead-code elimination, register allocation, spilling, and machine-code generation.

Two phases of the Bedrock2 compiler have different source and target languages.
In our experience, verification of security properties has little interesting interaction with the difficulties of language heterogeneity; our techniques extend to the heterogeneous case without complication.
%In other words, an experienced proof-assistant user who understands how to verify functional correctness for a heterogeneous pass and how to verify security properties (with our novel techniques) for a homogeneous pass then also knows how to verify security properties for a heterogeneous pass; our techniques extend to the heterogeneous case without complication.
So, for simplicity, our discussion will mostly stick to phases with matching source and target languages.

\subsection{Bedrock2 Semantics}\label{bedrock2_sem}
The canonical definition of Bedrock2 semantics is given as a set of weakest-precondition predicate-transformer rules that form an inductive definition of $p\Downarrow{}Q$.
This form is closely related to traditional big-step semantics (for a well-defined terminating program, $\bigcap\left\{Q : p\Downarrow Q\right\}$ is the set of possible outcomes) and suitable for compiler verification by forward specification preservation~\cite{Omnisemantics}.
%, and lends to easy derivation of syntax-directed weakest-precondition rules for program verification~\cite{Omnisemantics}.
\par Our semantics are the same as in the Bedrock2 paper~\cite{lightbulb}, except we have added leakage traces \(k\).
A leakage trace includes two types of events: \emph{leakage} and \emph{compile-time nondeterminism} events.
The $\branchnoarg$ events record how \emph{compiler-resolved} nondeterministic events were resolved, while the $\mathtt{Leak}$ events, as in the previous high-level description of leakage trace, represent information that is leaked to an adversary.
Concretely, we write a leakage trace as a list---for example,
\[[\leak x_1, \branch y_1, \leak x_2, \leak x_3, \branch y_2, \branch y_3, ...],\]
where the \(x_i\) and \(y_i\) are machine words.
% For short, we will often refer to a leakage trace simply as \emph{leakage}.
In Bedrock2, $\branchnoarg$ events appear only in the stack-allocation rule (\cref{fix_stack_swap}).
However, our techniques apply equally well to other sources of compiler-resolved nondeterminism: nondeterministic evaluation order for expressions, PRNGs, etc.
\par The following big-step-omnisemantics~\cite[\S 2]{Omnisemantics} rules illustrate our Bedrock2 semantics \(\Downarrow\).
\begin{mathpar}
\inferrule[eval-store]
{(x,a) \in \ell \\ (a+n) \in \mathtt{dom}\;m \\\\
(y,v) \in \ell \\ Q(m[(a+n):=v],\,\ell,\,t,\,k:: \leak (a+n)) }
{((x[n]=y),m,\ell,t,k) \Downarrow Q}

\inferrule[eval-input]
{\forall n.\;Q(m,\,\ell[x:=n],\,t :: \mathtt{IN }\;n, k)}
{((x=\texttt{input()}),m,\ell,t,k) \Downarrow Q}

\inferrule[eval-seq]
{(p_1,m,\ell,t,k) \Downarrow Q_1 \\
  (\forall m',\ell',t',k'.~ Q_1(m', \ell', t', k') \;{\Longrightarrow}\; (p_2,m',\ell',t',k') \Downarrow Q)}
{((p_1;p_2),m,\ell,t,k) \Downarrow Q}
% not important, but I, at least, would find eval-seq easier to read in "CPS"
\end{mathpar}

Observe that if a program that satisfies $\Downarrow$ reads an input and then stores to an address that depends on that input, all possible store addresses must be in the domain of $m$, and the postcondition $Q$ must accept all leakage traces (and memories) that may result. On the other hand, just reading an input and storing it to a fixed location $a[0]$ only requires $Q$ to admit one leakage trace: $[\leak a]$.

We should also emphasize that Bedrock2 embodies a rather different compiler-verification approach than, e.g., CompCert's~\cite{Leroy-backend}.
CompCert handles complex nondeterminism through two main mechanisms: (1) a determinizing high-level memory model and (2) small-step semantics with a bestiary of simulation-proof techniques.
Bedrock2 avoids the first complication by exposing a flat, machine code-style memory model at the source level.
More interestingly, Bedrock2's use of omnisemantics \emph{enables all compiler-phase proofs to be performed using forward reasoning and big-step hypotheses}, even with nondeterministic target languages~\cite[\S 6]{Omnisemantics}.
Such a proof proceeds by simple induction over the big-step derivation for the source program, deriving the matching derivation for the target program.
It was a nonobvious result that such a scheme works well for a language as expressive as Bedrock2, and with our work described in this paper, we add another surprising consequence: \emph{proof of information-flow hyperproperties not just in big-step, forward-reasoning proof style, but also with no explicit consideration of multiple executions of source or target programs}.

Our work inherits a limitation from Bedrock2: the semantics of Bedrock2 only describe terminating (totally correct) programs, because omnisemantics has not yet been adapted to intentional nontermination (e.g., infinite reactivity).
%Omnisemantics is integral to (or, at least, extremely convenient for) our techniques.
Thus, we consider nonterminating programs as out-of-scope.
However, we expect our techniques to extend easily to any form of omnisemantics augmented to support nontermination.
Further, even without omnisemantics support for nontermination, we should be able to lift our new results to infinitely reactive programs in the same way as did the original work with Bedrock2: by defining higher-level patterns like event loops~\cite{lightbulb}.

\subsection{RISC-V Semantics}

For the target language of compiler-correctness proofs, we instrumented the Bedrock2 {RISC-V} semantics~\cite{riscvsem} with leakage traces.
Like source-level Bedrock2 semantics, the RISC-V semantics is canonically interpreted as a weakest-precondition predicate transformer, which we modified also to log leakage.
Our semantics additionally considers all instruction-fetch addresses to be leaked.
\par Following is an excerpt of our function computing the leakage of an instruction.
\begin{verbatim}
| Add _ _ _ => Return LeakAdd (* no arguments to LeakAdd, so inputs remain secret *)
| Lw _ rs1 _ => addr <- getReg rs1; Return (LeakLw addr)
| Blt rs1 rs2 _ => a <- getReg rs1; b <- getReg rs2; Return (LeakBlt (word.lts a b))
\end{verbatim}
% We also consider the full assembly instruction (i.e., including register names, immediates, ordering) to be leaked.
% TODO: readd the following when we have more space.
%The instruction-specific leakage details we encoded represent our best understanging of leakage behavior of the processors that have been used to run Bedrock2 code in past work, and include all leakage allowed by the later \verb|Zkt| extension to the {RISC-V} specification~\cite[\S 32.6]{riscv202404}.
%TODO: \andreser{Seems right, but fact-check before camera-ready :) For example, does  \verb|Zkt| actually say that \verb|Blt| can only leak the comparison output, not the inputs? May a reviewer impress us by catching that.}

\section{Specifying Constant Time Using Partially Determinized Semantics}\label{sec:determinize}
In \cref{semapproach}, with the username-password example, we illustrated a sufficient condition for recognizing constant-timeness of programs: namely, a program is \emph{naive-constant-time} if its leakage is a function of public values (including both public initial state and public runtime input).
% This can be formalized even with nondeterminism due to I/O because our semantics keeps a log of the nondeterministic outcomes of I/O events, and the postcondition can assert that the leakage trace is a function of the appropriate subset of the information in the I/O log.
% When I wrote the above sentence, I didn't realize it had been covered so well in 1.2.  I don't think it's necessary here, and it's stated better in 1.2 anyway.
In \cref{sec:det}, we present examples of naive-constant-time specifications.
However, in \cref{stack_swap_bad} we show that some intuitively constant-time programs involving compiler-resolved nondeterminism fail to satisfy naive constant time.
Finally, in \cref{fix_stack_swap} we introduce the ``compile-time-determinized'' semantics \(\DownarrowA\) as an expedient solution, allowing us to formulate a more permissive definition.

\subsection{The Simple Case: No Leakage Dependence on Compiler-Resolved Nondeterminism}\label{sec:det}
In this section, we illustrate specifications of naive constant time.
The \(\Downarrow\) predicate we use in this section---and in the rest of this paper---is just like the one defined in \cref{bedrock2_sem}, except the postcondition takes only two arguments \((t, k)\) rather than four \((m,\ell,t,k)\).

\subsubsection{First example: \texttt{swap} of two values stored in memory}\label{sec:swap}
\begin{verbatim}
swap(int* pa, int* pb) { int tmp = *pa; *pa = *pb; *pb = tmp; }
\end{verbatim}
Consider the addresses \verb|pa| and \verb|pb| to be public, while the values \verb|*pa| and \verb|*pb| stored at the addresses are private.
Then we can specify naive-constant-timeness of \verb|swap| as follows.
\begin{align*}
  & \exists f.\; \forall a_\textrm{ptr}, b_\textrm{ptr}, a, b. \; \forall m, \ell. \\
  & (\texttt{swap(pa, pb)},\, m[a_\textrm{ptr} := a, b_\textrm{ptr} := b],\, \ell[\mathtt{pa} := a_\textrm{ptr}, \mathtt{pb} := b_\textrm{ptr}],\, [],\, []) \Downarrow \{([], f(a_\textrm{ptr}, b_\textrm{ptr}))\}.
\end{align*}
The final leakage is allowed to depend on the public values \(a_\textrm{ptr}, b_\textrm{ptr}\) but not the private values \(a, b,m,\ell\).
Thus, we interpret the specification above as saying that \(a,b,m,\ell\) are not leaked.
\par The leakage of \verb|swap| consists of a load at \(a_\textrm{ptr}\), a load at \(b_\textrm{ptr}\), a store at \(a_\textrm{ptr}\), and a store at \(b_\textrm{ptr}\).
So, to prove the specification above, we define \(f(a_\textrm{ptr}, b_\textrm{ptr}) := [\leak a_\textrm{ptr}, \leak b_\textrm{ptr}, \leak a_\textrm{ptr}, \leak b_\textrm{ptr}]\).

\subsubsection{Second example: \texttt{login}}\label{login} (written in Pythonish syntax)
\begin{verbatim}
def login():
  username = input("enter your username")
  password = lookup_password(username)
  attempt  = input("enter your password")
  print(if strs_eq(password, attempt) then "hooray" else "wrong")
\end{verbatim}
Consider the username to be public but the password to be private.
Thus a constant-time program should have leakage independent of both the password and the password attempt.
Here is a specification of the \verb|login| function, expressing that the leakage only depends on the username \(u\).
\[\exists f. \; \forall m,\ell. \; (\verb|login()|, m, \ell, [], []) \Downarrow \{([\verb|IN | u, \verb|IN | a, \verb|OUT | r], f(u)) : u,a,r \textrm{ are strings}\}.\]
%The point is that the leakage does not depend on the password (which we assume is stored somewhere in the memory \(m\)), and it does not depend on the attempt \(a\) either.
\par However, the implementation of \verb|login| does not actually meet this specification.
The issue is that it branches on whether \verb|password == attempt|, and we assume branches are leaked!
So, it is unavoidable that the leakage depends on whether \verb|password == attempt|.
\par The following subtler specification actually holds for the \verb|login| function, assuming constant-time helper functions.
We assume some pure function \(L(m,u)\) that, given memory \(m\) and username \(u\), returns the output of the \verb|lookup_password| function on input \(u\) in \(m\).
\[\exists f. \; \forall m,\ell. \; (\verb|login()|, m, \ell, [], []) \Downarrow \{([\verb|IN | u, \verb|IN | a, \verb|OUT | r], f(u, a == L(m,u))) : u,a,r \textrm{ are strings}\}.\]
%\par This example demontrates the flexibility of this style of specification.
It was easy to specify that, although the leakage may not depend on the attempt \(a\) or the password \(L(m,u)\) in arbitrary ways, it is just allowed to depend on whether they are equal.

\subsection{Problem: Naive Constant Time Forbids Compiler-Resolved Nondeterminism}\label{stack_swap_bad}
\begin{verbatim}
  stack_swap() { int s[2] = {0, 1}; swap(&s[0], &s[1]); }
\end{verbatim}
Intuitively, \verb|stack_swap()| should be a well-defined constant-time program with a no-op behavior.
For \verb|stack_swap| to be naive-constant-time, its leakage should be a function of public values.
But its leakage depends on the address of the array \verb|s|---which we model as being nondeterministic---so in fact the leakage is not a deterministic function of anything, let alone public values.
We can still write a constant-time specification via an ad-hoc generalization of naive constant time.
In addition to being allowed to depend on public initial state and I/O, we permit the leakage to depend arbitrarily on the allocation address recorded earlier in the leakage trace via $\branchnoarg$:
\[\exists f. \; \forall m,\ell. \; (\verb|stack_swap()|, m, \ell, [], []) \Downarrow \{([], [\branch a] \app f(a)) : a \textrm{ is a word}\}.\]
For this program it was clear what to say: the trace should begin with a nondeterministic event, and then the rest of it is allowed to depend on the outcome of that event (but not private values!) arbitrarily.
However, more complicated programs could have $\branchnoarg$ events interleaved with $\mathtt{Leak}$ events, with the length of the trace depending on the outcomes of earlier $\branchnoarg$.
So, general constant-time specifications appear to require elaborate postconditions detailing possible allocation patterns throughout execution.
% In \cref{sec:two}, we explore what these postconditions should look like.
However, there is a simple alternative approach.

% This section will explore a specification style that allows for simpler postconditions. We will develop and compare to two more earnest (but subtly nonequivalent!) systematizations of the specification idea we used here for \verb|stack_swap| in sections \ref{sec:two} and~\ref{sec:predictors}.

An alternative perspective on programs like \verb|stack_swap| is that the difficulty arises because the source-language semantics are too nondeterministic.
Our definition of \(\Downarrow\) says that the memory allocation is arbitrarily nondeterministic.
But then the compiler is given too much flexibility; rather, it should somehow be encoded in the semantics that memory allocation is independent of, for instance, secrets stored in memory.
To that end, the source-language semantics can instead say that memory allocation is some \emph{unspecified deterministic function} of values that, in particular, do not include secrets stored in memory.
% This means that any semantics-preserving compiler from our source language must generate code that chooses memory addresses in a way that depends only on with which the semantics calls this function.
We now expand upon this idea.

\subsection{Solution: Constraining Inputs to Compiler-Resolved Nondeterministic Choices}\label{fix_stack_swap}

If a source program's behavior depends on some nondeterministic choice resolved by the compiler, it is natural to ask what the compiler's choice is allowed in turn to depend on.
Conveniently, it appears that the answer for each unspecified behavior we considered in Bedrock2 is either ``anything'' or ``very little.''
For example, uninitialized \emph{contents} of freshly allocated memory could depend on anything (e.g., recently deallocated memory).
Thus, the programmer ought to avoid leaking these choices. % for anything the program has done to remain private.
In contrast, memory-allocation \emph{addresses} are necessarily leaked by source programs through load and store commands.
Thus, it is the compiler's responsibility to ensure that its method of address selection does not reveal the program's secrets.
Similar considerations arise when outputting (rather than leaking) nondeterminstically chosen values: revealing uninitialized memory contents could be compiled into an arbitrarily bad information leak, but printing a freshly allocated address is a common instrumentation technique---we will return to this aspect in \cref{sap_spec} and \cref{predictors_bad}.

%To delineate this contract between programmer and compiler
The source-language semantics should specify what each compiler-implemented nondeterministic choice can depend on.
Concretely, there are no restrictions on the data-flow dependencies of uninitialized data, and allocation addresses can only depend on information that the program has already leaked---for example, past control flow.
%Of course, the semantics should not specify \emph{how} the compiler allocates stack memory---doing so would limit implementation flexibility, in particular ability to optimize away needless stack allocations.
Allowing memory-allocation addresses to depend arbitrarily on the leakage trace is a satisfying simplification: instead of separately keeping track of what could be leaked when the program reveals a choice made by compiler-generated code, the semantics consider everything that this choice could leak to be revealed eagerly.
%While input-before-output ordering can be security-critical, we are not aware of even contrived use cases where it would be beneficially to conceal a value for a while, only to allow the same value to be leaked through side channels later on.

More formally, our recommendation is to encode the compiler's limited permission to make an implementation choice as an opaque (top-level existentially quantified) \emph{oracle} function $\mathcal{A}$ in the semantics.
Each operational rule with limited nondeterminism calls \(\mathcal{A}\) with the leakage trace as an argument; the output of \(\mathcal{A}\) determines the outcome of the nondeterministic event.
This flow corresponds to the perspective that, when the dependencies of a choice are limited to a subset of the program's state, that choice is no longer nondeterministic. % but rather an unknown function of the specified inputs.

We will write \(\DownarrowA\) to denote a modified version of the Bedrock2 semantics, where each stack-allocation address is picked using an opaque function \(\mathcal{A}\) of the leakage trace.
The only difference between the two versions, \(\Downarrow\) and \(\DownarrowA\), is how the address \(a\) of a fresh stack allocation is picked:
\begin{mathpar}
\inferrule*[right=\texttt{stackalloc}]
{ \forall a.\;\;\;\;~\forall~m_{\textit{new}}.~(\mathsf{dom}~m_{\textit{new}} \cap \mathsf{dom}~m) = \varnothing{} \;\wedge\;
  \mathsf{dom}~m_{\textit{new}} = [a, a+n) \;\Longrightarrow\; \\\\
	(p,\,m \cup m_{\textit{new}},\, \ell[\texttt{x} := a],\,t,\,k :: \branch a) \Downarrow \{(m', \ell', t', k') : Q(m' - m_\textit{new},\ell',\,t',\,k')\}}
	{ ((\mathtt{stackalloc}~n~\texttt{as x;}~p),\,m,\,\ell,\,t,\, k) \Downarrow Q }

\inferrule*[right=\texttt{stackalloc}\(_\mathcal{A}\)]
	{ a = \mathcal{A}(k)\;\;\;\;
	\forall m_{\textit{new}}.~(\mathsf{dom}~m_{\textit{new}} \cap \mathsf{dom}~m) = \varnothing{} \;\wedge\;
  \mathsf{dom}~m_{\textit{new}} = [a, a+n) \;\Longrightarrow\; \\\\
	(p,\, m \cup m_{\textit{new}},\, \ell[\texttt{x} := a],\,t,\,k::\branch a) \DownarrowA \{(m', \ell', t', k') : Q(m' - m_\textit{new},\ell',\,t',\,k')\}}
	{ ((\mathtt{stackalloc}~n~\texttt{as x;}~p),\,m,\,\ell,\,t,\, k) \DownarrowA Q }
\end{mathpar}

\begin{comment}
Inductive exec (pick_sp : option (trace -> word)) :
  cmd -> trace -> io_trace -> mem -> locals ->
  (trace -> io_trace -> mem -> locals -> Prop) -> Prop :=
...
| stackalloc x n body
    k t mSmall l post
    (_ : Z.modulo n (bytes_per_word width) = 0)
    (_ : forall a mStack mCombined,
         (match pick_sp with
          | None => True
          | Some pick_sp => a = pick_sp k
          end) ->
         anybytes a n mStack ->
         map.split mCombined mSmall mStack ->
         exec body (consume_word a :: k) t mCombined (map.put l x a)
           (fun k' t' mCombined' l' =>
             exists mSmall' mStack',
               anybytes a n mStack' /\
               map.split mCombined' mSmall' mStack' /\
               post k' t' mSmall' l'))
  : exec (cmd.stackalloc x n body) k t mSmall l mc post
\end{comment}

We found this oracle-based encoding workable in both compiler and source-program proofs, but actually specifying the right semantics and understanding its implications were far from intuitive.
For example, our first attempt did not log oracle calls in the leakage trace, so we had accidentally written semantics guaranteeing that any two consecutive
(i.e., with no leakage events in-between)
allocations have the same address.
(Note that oracle calls are being logged with the \(\branch a\) event above, which serves the dual purpose of (1) logging the outcome of a nondeterministic event, which we will use in \cref{sec:two}, and (2) preventing the just-mentioned problem.)
%With these flawed semantics, a program could check whether the two addresses are the same, behave arbitrarily badly if not, and still be proven correct.
%This example shows that, while calling $\mathcal{A}$ with the leakage trace as input is an expedient choice, its suitability is subtly dependent on the particular bookkeeping used to assemble the leakage trace.
% \andreser{In section TODO, we prove it equivalent to a differently-structured definition} \owenc{I tried to think about a nice way to fit this sentence into this paragraph, but I couldn't.  I think it's ok to not mention this here.}
\par Now that we have the \(\DownarrowA\) predicate, it is straightforward to define constant time for arbitrary Bedrock2 programs.
We say that a program is \emph{constant-time} if it executes with a leakage trace that is a function of (1) the oracle \(\mathcal{A}\) and (2) public values.
%It should be intuitive that the leakage is allowed to depend on \(\mathcal{A}\)---surely it should not be a secret how the compiler stack-allocates.

\subsubsection{Example: Now we can easily write a specification for \texttt{stack\_swap}}\label{stack_swap_spec}
%We can write a specification saying that \verb|stack_swap| is constant-time as follows.
\[\exists f. \; \forall \mathcal{A}. \; \forall m,\ell. \; (\verb|stack_swap()|, m, \ell, [], []) \DownarrowA \{([], f(\mathcal{A}))\}.\]
Leakage is allowed to depend on \(\mathcal{A}\) and public values (there are none) but not on secrets in $m, \ell$.

\subsubsection{Example: intentionally outputting compiler-resolved nondeterminism is safe per $\DownarrowA$}\label{sap_spec}
\begin{verbatim}
stackalloc_and_print() { stackalloc 1 as x; print(x); }
\end{verbatim}
Some security-related considerations are not captured by the leakage trace.
For instance, suppose an adversary can read values that are printed; then, the printed values should not depend on secrets.
\par As the program above prints the \emph{pointer} \verb|x| (as opposed to uninitialized memory such as \verb|*x|), we expect its output to be independent of secrets.
We formalize this expectation as follows; the point is that the printed value may depend on \(\mathcal{A}\), but it is independent of secrets in \(m\) and \(\ell\).
\[\exists f_\mathrm{io}, f. \; \forall \mathcal{A}. \; \forall m,\ell. \; (\verb|stackalloc_and_print()|, m, \ell, [], []) \DownarrowA \{([\verb|OUT | f_\mathrm{io}(\mathcal{A})], f(\mathcal{A}))\}.\]

% The proof of this specification for \verb|stackalloc_and_print| critically relies on semantics' rule \mbox{\verb|stackalloc|\(_\mathcal{A}\)} only requiring \emph{the entire postcondition} to be proven for $a = \mathcal{A}(k)$, as opposed to $\forall\; a$.
% \owenc{^I agree that this is an important point to make sometime (i will try to make it in \cref{predictors} or \cref{predictors_bad}), but I think now is too early to do it.  At this point, somebody reading that would think 'what else would you possibly do, other than require "the whole postcondition"'?}
\par The \verb|stackalloc_and_print| function is a toy example.
For a more realistic example where we care about outputs not depending on secrets, consider a concurrent program.
In Bedrock2, interaction with shared memory is modeled as I/O.
Instead of printing (as in \verb|stackalloc_and_print|), output might mean writing to shared memory. 
So, proving that outputs do not depend on secrets guarantees that the thread is not leaking secrets to other threads via memory.
% The formalization based on oracle $\mathcal{A}$ supports this.

\section{Specifying Constant Time Using Oracles in the Postcondition}\label{sec:two}
In \cref{sec:determinize}, we considered the problem of writing constant-time specifications in the context of compiler-resolved nondeterminism.
We presented a simple solution: get rid of the problematic nondeterminism.
This solution works well, but retaining nondeterminism can be convenient. % for proofs that consider multiple possible executions of the same program.
%Also, it would be unsatisfying if the only way to understand when a nondeterministic program is constant-time involved removing the nondeterminism altogether.

This section describes how to achieve the same results as in \cref{sec:determinize} without resorting to the determinized semantics \(\DownarrowA\).
We want to find a more general solution than the ad-hoc approach used to write a specification of \verb|stack_swap| in \cref{stack_swap_bad}.
Finding a more general approach is important for two reasons.
First, writing custom specifications for each function would be complicated and error-prone.
Second, memory-allocation nondeterminism (unlike I/O) disappears after a program is compiled.
Thus, we need to treat it uniformly so that we can prove a compiler theorem about it.
(Roughly, the compiler theorem should say that if the source program is constant-time up to memory-allocation nondeterminism, then the target program is constant-time.)

\subsection{How to Write Constant-Time Specifications with Nondeterministic Semantics}\label{sec:ct_nondet}
Suppose a program \(p\) is constant-time.
Intuitively speaking, if we fix public values, then---as discussed in \cref{sec:determinize}---the leakage of \(p\) should be a function of the oracle \(\mathcal{A}\) with which \(p\) executes.
Our goal here, then, is to give a meaning to the phrase ``the oracle \(\mathcal{A}\) with which \(p\) executes,'' without resorting to the determinized \(\DownarrowA\).
This framing motivates the following definitions.

\par As a first step, we formalize how to say, in the context of \(\Downarrow\) rather than \(\DownarrowA\), that ``the nondeterminism was resolved according to \(\mathcal{A}\),'' or ``the leakage trace is compatible with \(\mathcal{A}\).''

\begin{definition}
  We say that a leakage trace \(k\) is \emph{compatible with} an oracle \(\mathcal{A}\), and we write \(k \sim \mathcal{A}\), if for all \(k_1, x, k_2\) such that \(k = k_1 \app [\branch x] \app k_2\), we have \(\mathcal{A}(k_1) = x\).
\end{definition}
The point is that, in a postcondition of \(\Downarrow\), the statement $k\sim\mathcal{A}$ holds if and only if \(\Downarrow\) happened to behave ``as if it were \(\DownarrowA\).''
By only requiring a postcondition to hold when \(\Downarrow\) behaves like \(\DownarrowA\), we can define a notion analogous to \(\DownarrowA\) directly in terms of $\Downarrow$.
\begin{definition}
  We say that \((p, m, \ell, [], []) \NdetDownarrowA Q\) if the postcondition \(Q\) holds whenever the program happens to execute compatibly with the oracle \(\mathcal{A}\).
  Formally,
  \[(p, m, \ell, [], []) \NdetDownarrowA Q := (p, m, \ell, [], []) \Downarrow \{(t, k) : k \sim \mathcal{A} \Rightarrow Q(t, k)\}.\]
\end{definition}

\subsection{Equivalence of Intrinsically and Postcondition-Wise Oracle-Based Semantics}\label{sec:equiv}
\begin{theorem}\label{equiv}
  For all \(p,m,\ell,Q\),
  \[\left[\forall \mathcal{A}. \; (p, m, \ell, [], []) \DownarrowA Q(\mathcal{A})\right] \iff \left[\forall \mathcal{A}. \; (p, m, \ell, [], []) \NdetDownarrowA Q(\mathcal{A})\right].\]
\end{theorem}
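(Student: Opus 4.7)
The plan is to prove each direction by induction on a derivation, after strengthening the statement to allow an arbitrary initial leakage prefix $k_0$ (compatible with the oracle under consideration), rather than just $[]$.

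For the easier direction $(\Leftarrow)$, I would strengthen the goal to: if $k_0 \sim \mathcal{A}$ and $(p,m,\ell,t_0,k_0) \Downarrow \{(t,k) : k \sim \mathcal{A} \Rightarrow Q(\mathcal{A})(t,k)\}$, then $(p,m,\ell,t_0,k_0) \DownarrowA Q(\mathcal{A})$, and then induct on the $\Downarrow$ derivation. All rules other than \texttt{stackalloc} coincide between the two semantics, and extending $k_0$ only by $\leaknoarg$-events preserves $\sim \mathcal{A}$, so the IH applies directly. For \texttt{stackalloc} I would instantiate the $\forall a$ premise of the $\Downarrow$ rule with $a = \mathcal{A}(k_0)$; then $k_0 :: \branch{\mathcal{A}(k_0)} \sim \mathcal{A}$, so the guard on the continuation's postcondition collapses into $Q(\mathcal{A})$, and the IH yields exactly the premise of the $\DownarrowA$ \texttt{stackalloc} rule.

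The harder direction $(\Rightarrow)$ is complicated by the fact that a single $\DownarrowA$ derivation only explores the one address chosen by $\mathcal{A}$ at each \texttt{stackalloc}, whereas building a $\Downarrow$ derivation requires a sub-derivation for \emph{every} address $a$. I would therefore strengthen the statement so as to carry the full family of hypotheses through the induction: if $k_0 \sim \mathcal{A}$ and for every oracle $\mathcal{B}$ with $k_0 \sim \mathcal{B}$ we have $(p,m,\ell,t_0,k_0) \DownarrowB Q(\mathcal{B})$, then $(p,m,\ell,t_0,k_0) \Downarrow \{(t,k) : k \sim \mathcal{A} \Rightarrow Q(\mathcal{A})(t,k)\}$. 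The induction is on the $\DownarrowA$ derivation obtained by instantiating the hypothesis at $\mathcal{B} = \mathcal{A}$. At \texttt{stackalloc}, for each address $a$ required by the $\Downarrow$ rule I would case-split on whether $a = \mathcal{A}(k_0)$. When it does, I use the current derivation's continuation. When it does not, I pick an alternate oracle $\mathcal{B}$ that agrees with $\mathcal{A}$ on prefixes of $k_0$ but has $\mathcal{B}(k_0) = a$, apply the family of hypotheses at $\mathcal{B}$, and observe that any extension of $k_0 :: \branch{a}$ is incompatible with $\mathcal{A}$; the target postcondition is vacuously true on this branch, so the IH applied with the fixed oracle $\mathcal{B}$ (followed by postcondition weakening to True) closes the goal.

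The main obstacle is keeping the induction hypothesis strong enough at every step: naively inducting on one $\DownarrowA$ derivation yields an IH about only the $\mathcal{A}$ branch, which is useless at \texttt{stackalloc}, where we also need sub-derivations for every other possible address. The key trick is to make the outer ``$\forall \mathcal{B}$ compatible with $k_0$'' quantifier part of the strengthened statement, so that each recursive step inherits a fresh compatible family of derivations on its smaller subgoals; for oracle-independent rules this propagates automatically, and for \texttt{stackalloc} each $a$ carves out exactly the sub-family of oracles that choose $a$ at $k_0$, which is the family the IH needs for the continuation.
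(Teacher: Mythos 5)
Your $(\Leftarrow)$ direction is fine and carries essentially the same content as the paper's easy direction (it fuses \cref{lemma1} and \cref{lemma2} into a single induction on the $\Downarrow$ derivation, instantiating the universally quantified address at $\mathcal{A}(k_0)$ and using the logged $\branch{}$ event to keep the trace compatible, so that the guard in the postcondition discharges at the end).

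The $(\Rightarrow)$ direction, however, has a genuine gap at exactly the point the paper identifies as the crux. In the \texttt{stackalloc} case with $a \neq \mathcal{A}(k_0)$ you still must produce a full $\Downarrow$ derivation for the continuation at address $a$: vacuity of the guarded postcondition does not help, because $\Downarrow$ with postcondition \emph{True} still demands safe termination under all subsequent nondeterministic choices, so a recursive appeal to the strengthened statement is unavoidable. But your induction is structural on the single derivation of \((p,m,\ell,t_0,k_0) \DownarrowA Q(\mathcal{A})\), and the derivation you would recurse on in this branch is obtained by inverting the hypothesis at the alternate oracle $\mathcal{B}$; it is a subderivation of a \emph{different} derivation, not of the one you are inducting on, so the structural IH does not apply to it. Nor can this be repaired by inducting on a size or height measure of the designated derivation: derivation height can depend on the allocated address (e.g.\ a \texttt{stackalloc} followed by a loop that counts down from the address), so the continuation of the $\mathcal{B}$-derivation can be arbitrarily larger than the entire $\mathcal{A}$-derivation. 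Your instinct to thread the whole family ``\(\forall \mathcal{B}\) compatible with \(k_0\)'' through the argument is the right data to carry, but it does not by itself supply a well-founded measure for the recursion you describe. This is precisely why the paper abandons induction over the big-step judgments for this direction and instead proves a small-step, per-execution characterization of $\Downarrow$ (\cref{big_small}, itself justified by the bespoke well-founded order $\sqsubset$ restricted to configurations whose executions all satisfy some postcondition); with that in hand, the hard direction becomes the simple argument that every execution follows some oracle $\mathcal{A}_E$ (giving safe termination from the hypothesis at $\mathcal{A}_E$), while executions compatible with $\mathcal{A}$ satisfy $Q(\mathcal{A})$ by the hypothesis at $\mathcal{A}$. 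To salvage your plan you would need to supply an analogous accessibility or well-foundedness argument for the states reachable under \emph{all} oracles, which is the missing idea rather than a routine strengthening of the induction hypothesis.
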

So, in a strong sense, the \(\NdetDownarrowA\) predicate is equivalent to the \(\DownarrowA\) predicate.
A surprising consequence is that \(\Downarrow\) can be used to express every specification---including constant-time specifications---that can be written with \(\DownarrowA\).
We do not need determinized semantics to define constant time.

\par \cref{sec:equiv2} proves this theorem.
The proof involves a few nonobvious technical devices but, we claim, does not depend on peculiarities of Bedrock2. % and so should be compatible with a variety of other language-design choices.
More precisely: an analogous result should hold for any reasonable computer language with any nondeterministic construct being factored out into an oracle.
In particular, the proof does not rely on e.g. finiteness of the machine-state type in Bedrock2.
However---as with omnisemantics in general---it is not immediately obvious how to extend this result to fine-grained perpetually reactive behaviors.

\subsection{Example: A Constant-Time Specification of \texttt{stack\_swap} Using \(\NdetDownarrowA\)}
Mirroring~\cref{stack_swap_spec}, we can state that \verb|stack_swap| is constant-time:
\[\exists f. \; \forall \mathcal{A}. \; \forall m, \ell. \; (\verb|stack_swap()|, m, \ell, [], []) \NdetDownarrowA \{([], f(\mathcal{A}))\}.\]
It is a basic property of omnisemantics~\cite{Omnisemantics} that nonvacuous \(\forall\) quantifiers commute with $\Downarrow$.
In particular, by unfolding \(\NdetDownarrowA\), the above specification is equivalent to the following one:
% that \emph{assumes} that $k\sim\mathcal{A}$ in the function postcondition:
% \owenc{I suppose both forms assume $k\sim\mathcal{A}$ in the postcondition, so commented it out}
\[\exists f. \; \forall m, \ell. \; (\verb|stack_swap()|, m, \ell, [], []) \Downarrow \{([], k) : \forall \mathcal{A}. \; k \sim \mathcal{A} \Rightarrow k = f(\mathcal{A})\}.\]
% We show this second, equivalent specification to provide intuition for the definition of $\NdetDownarrowA$ and for $\sim$-based specifications.
% For the rest of this paper we will write specification in terms of \(\NdetDownarrowA\), mainly for brevity.

\section{Predictors: a Restricted Class of Functions Taking Oracles to Traces}\label{predictors}
% \andreser{should we discuss whether a program can have well-defiend behavior under $\DownarrowA$ but not $\Downarrow$?} \owenc{I put this discussion immediately after the equivalence theroem.  I think it is interesting but not sure if we want to keep it.  Possibly we should put it in the appendix.}

Rephrasing the intuition we have been working from: a program \(p\) is constant-time if, after fixing public values, the trace of \(p\) is a function of the oracle \(\mathcal{A}\) with which \(p\) executes.
Thus there should be a procedure \(\mathcal{P}\) for taking the oracle \(\mathcal{A}\) with which \(p\) executes and obtaining the trace of \(p\).
\par Now, we consider what the procedure \(\mathcal{P}\) should look like.
It should not need to query the oracle arbitrarily.
For instance, \(\mathcal{P}\) should be able to generate a certain prefix of the trace---namely, the prefix before \(p\) takes any nondeterministic branches---without querying the oracle!
In general, given any prefix of the trace of \(p\), the procedure \(\mathcal{P}\) should be able to predict what the next event in the trace will be---except, when the next event is $\branchnoarg$, then \(\mathcal{P}\) cannot know which way the branch goes, so it has to query the oracle.
\par We will now give a formal definition corresponding to the class of procedures \(\mathcal{P}\) we are talking about.
As their function is to take a prefix of a trace and predict what comes next (modulo nondeterminism), we will call them \emph{predictors}.
We will say that \(\mathcal{P}\) \emph{predicts} a trace \(k\) if, for every oracle \(\mathcal{A}\) compatible with \(k\), the trace generated by running the procedure \(\mathcal{P}\) with the oracle \(\mathcal{A}\) will be \(k\).

\begin{definition}
  A \emph{predictor} is a function \(\mathcal{P}\) that takes in a trace and outputs one of the symbols \(\branchnoarg\), \(\leak x\), or \(\en\).
  We say \(\mathcal{P}\) \emph{predicts} a trace \(k\), and we write \(k \in \mathcal{P}\), if the following hold.
  \begin{itemize}
    \item $\forall k_1, k_2, x$, if  \(k = k_1 \app [\leak x] \app k_2\), $\;\;\;\;\;\;\;\;\;\;$ then \(\mathcal{P}(k_1) = \leak x\).
    \item $\forall k_1, k_2, x$, if \(k = k_1 \app [\branch x] \app k_2\), then \(\mathcal{P}(k_1) = \branchnoarg\).
    \item \(\mathcal{P}(k) = \en\).
  \end{itemize}
\end{definition}

\begin{definition}\label{ex_with_pred}
  Let \(\mathcal{P}\) be any function that takes I/O traces to predictors.
  We say that \((p, m, \ell, [], [])\) \emph{executes with predictor} \(\mathcal{P}\) if \((p, m, \ell, [], []) \Downarrow \{(t, k) : k \in \mathcal{P}(t)\}\).
\end{definition}
We say that a program \(p\) is \emph{predictor-constant-time} if it executes with a predictor depending only on public initial values and public runtime input.
More precisely: let \(g\) be such that \(g(t)\) comprises exactly the public information contained in the I/O trace \(t\) of \(p\).
Then \(p\) is \emph{predictor-constant-time} if it executes with a predictor of the form \(\mathcal{P} \circ g\), where \(\mathcal{P}\) depends only on public initial values.
\par Now, we show that this definition of predictors corresponds exactly to our intuitive thought of predictors as procedures that take oracles as input and return traces.
\begin{theorem}\label{pred_is_fun}
  For any predictor \(\mathcal{P}\), there exists a partial function \(f_\mathcal{P}\), which takes oracles to leakage traces, such that the following holds.
  \[\forall \mathcal{A}. \; \forall k. \; k \sim \mathcal{A} \implies \left[ k \in \mathcal{P} \Leftrightarrow f_\mathcal{P}(\mathcal{A}) = \mathtt{Some}\, k\right] \]
\end{theorem}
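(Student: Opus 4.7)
The plan is to define $f_\mathcal{P}$ by an interpreter that runs $\mathcal{P}$ against the oracle $\mathcal{A}$. Starting from the empty trace, at each step the interpreter inspects $\mathcal{P}$ applied to the current trace $k$: if $\mathcal{P}(k) = \leak x$, it appends $\leak x$; if $\mathcal{P}(k) = \branchnoarg$, it appends $\branch \mathcal{A}(k)$; if $\mathcal{P}(k) = \en$, it halts and returns $k$. Set $f_\mathcal{P}(\mathcal{A}) := \mathtt{Some}\, k$ if the interpreter eventually halts at $k$, and $\mathtt{None}$ otherwise. The interpreter is fully deterministic in $(\mathcal{P},\mathcal{A})$, so $f_\mathcal{P}$ is well-defined as a partial function.

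For the $\Rightarrow$ direction, assume $k \in \mathcal{P}$ and $k \sim \mathcal{A}$. I would show, by induction on $|k_1|$ ranging over prefixes $k_1$ of $k$, that the interpreter's state after $|k_1|$ steps is exactly $k_1$. The base case is immediate. For the inductive step, write $k = k_1 \app [e] \app k_2$. If $e = \leak x$, the first bullet of the predictor definition gives $\mathcal{P}(k_1) = \leak x$, so the interpreter appends exactly $e$. If $e = \branch x$, the second bullet gives $\mathcal{P}(k_1) = \branchnoarg$, and the compatibility $k \sim \mathcal{A}$ yields $\mathcal{A}(k_1) = x$, so again the interpreter appends $e$. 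Once the whole of $k$ has been consumed, the third bullet gives $\mathcal{P}(k) = \en$, so the interpreter halts at $k$ and $f_\mathcal{P}(\mathcal{A}) = \mathtt{Some}\, k$.

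The $\Leftarrow$ direction reads the three predictor conditions directly off the halting execution. If $f_\mathcal{P}(\mathcal{A}) = \mathtt{Some}\, k$, then the interpreter halted at $k$, giving $\mathcal{P}(k) = \en$. For any decomposition $k = k_1 \app [e] \app k_2$, the step that extended the state from $k_1$ consulted $\mathcal{P}(k_1)$: if $e = \leak x$ then $\mathcal{P}(k_1) = \leak x$, and if $e = \branch x$ then $\mathcal{P}(k_1) = \branchnoarg$. Hence $k \in \mathcal{P}$.

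The main obstacle is formalizing the interpreter as a partial function when it may diverge (for instance when $\mathcal{P}$ loops through $\leak$ events without ever reaching $\en$). I would not use a \texttt{Fixpoint} directly; instead I would define an inductive relation ``the interpreter, started at state $k_0$, halts at $k$ given $\mathcal{P}$ and $\mathcal{A}$'' by the three cases above, prove by induction on the derivation (plus inversion) that it is functional in $k$, and then use indefinite description to extract $f_\mathcal{P}$. Alternatively, one can define a fuel-indexed version by structural recursion on a natural number and set $f_\mathcal{P}(\mathcal{A}) = \mathtt{Some}\, k$ iff some fuel value witnesses termination at $k$.
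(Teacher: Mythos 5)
Your proposal is correct and follows essentially the same route as the paper: the paper's proof also defines \(f_\mathcal{P}(\mathcal{A})\) as exactly this interpreter (append \(\leak x\) when \(\mathcal{P}\) predicts a leak, query \(\mathcal{A}\) on a \(\branchnoarg\) prediction, return \(\mathtt{Some}\,k\) on \(\en\), \(\mathtt{None}\) on divergence). You merely spell out the prefix-induction verification and the partiality formalization that the paper leaves implicit, which is fine.
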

\begin{proof}
  The function \(f_\mathcal{P}(\mathcal{A})\) is precisely the procedure outlined in the first two paragraphs of this section.
  We describe it as an imperative program.
  The program \(f_\mathcal{P}(\mathcal{A})\) begins by setting the variable \(k \gets []\).
  Then, repeatedly: if the predicted next event \(\mathcal{P}(k)\) is \(\leak x\), then it sets \(k \gets k \app [\leak x]\);
  else if \(\mathcal{P}(k)\) is \(\branch\), then it sets \(k \gets k \app [\branch \mathcal{A}(k)]\);
  else if \(\mathcal{P}(k)\) is \(\en\), then it returns \(\mathtt{Some}\,k\).
  (If this program loops, we say it returns \(\mathtt{None}\).)
\end{proof}

\subsection{Equivalence of Predictor Constant Time and (Oracle) Constant Time}\label{pred_and_not}
Predictor constant time is a more structured and, a priori, stricter requirement than the notions of constant time developed in previous sections.
Specifically, if a program is predictor-constant-time, then not only is its trace a function \(f\) of the oracle with which it executes, but the function \(f\) takes a certain simple form: in particular, it comes from some predictor via \cref{pred_is_fun}.
\begin{corollary}\label{pred_to_oracle}
  Let \(g\) be any function.
  (As before, we interpret \(g(t)\) as being the public part of \(t\).)
  For any \(\mathcal{P}\), the following holds: for all \(p,m,\ell\),
  \[(p, m, \ell, [], []) \Downarrow \{(t, k) : k \in \mathcal{P} \circ g(t)\} \iff \forall \mathcal{A}.\; (p, m, \ell, [], []) \NdetDownarrowA \{(t, k) : \mathtt{Some }\,k = f_{\mathcal{P} \circ g(t)}(\mathcal{A})\}.\]
\end{corollary}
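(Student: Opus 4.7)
My plan is to reduce the corollary to \cref{pred_is_fun} by noting that the two postconditions are pointwise equivalent once the universal quantifier over oracles is pulled inside.

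First, I would unfold $\NdetDownarrowA$ on the right-hand side to obtain
\[\forall \mathcal{A}.\; (p, m, \ell, [], []) \Downarrow \{(t, k) : k \sim \mathcal{A} \Rightarrow \mathtt{Some}\,k = f_{\mathcal{P} \circ g(t)}(\mathcal{A})\}.\]
Then I would invoke the standard omnisemantics fact (used already in the \(\NdetDownarrowA\) discussion of \cref{sec:ct_nondet}) that nonvacuous universal quantifiers commute with $\Downarrow$; the oracle type is clearly inhabited, so the quantifier is nonvacuous. This lets me rewrite the right-hand side as a single $\Downarrow$ whose postcondition is
\[\{(t, k) : \forall \mathcal{A}.\; k \sim \mathcal{A} \Rightarrow \mathtt{Some}\,k = f_{\mathcal{P} \circ g(t)}(\mathcal{A})\}.\]

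Next, for each fixed $(t, k)$, I would show that
\[k \in \mathcal{P} \circ g(t) \;\iff\; \forall \mathcal{A}.\; k \sim \mathcal{A} \Rightarrow \mathtt{Some}\,k = f_{\mathcal{P} \circ g(t)}(\mathcal{A}).\]
The forward direction is immediate from \cref{pred_is_fun}: if $k \in \mathcal{P} \circ g(t)$ and $k \sim \mathcal{A}$, then $f_{\mathcal{P} \circ g(t)}(\mathcal{A}) = \mathtt{Some}\,k$. For the backward direction, I would construct an explicit oracle $\mathcal{A}_0$ compatible with $k$ by reading off the $\branchnoarg$ events in $k$ and defining $\mathcal{A}_0$ arbitrarily on prefixes not of the form $k_1$ with $k = k_1 \app [\branch x] \app k_2$; specializing the hypothesis to this $\mathcal{A}_0$ gives $\mathtt{Some}\,k = f_{\mathcal{P} \circ g(t)}(\mathcal{A}_0)$, and another application of \cref{pred_is_fun} (in the other direction) yields $k \in \mathcal{P} \circ g(t)$.

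Finally, since the two postconditions are pointwise equal, monotonicity (really, congruence) of $\Downarrow$ in its postcondition closes both implications of the corollary simultaneously. The only step that is not bookkeeping is the construction of a compatible oracle $\mathcal{A}_0$ for a given $k$, and I expect that to be the main obstacle, although it is a short combinatorial argument (define $\mathcal{A}_0$ by case analysis on whether its input is a prefix of $k$ immediately preceding a $\branchnoarg$ event, using any default otherwise). Everything else is just unfolding of definitions, the quantifier-commutation lemma, and two invocations of \cref{pred_is_fun}.
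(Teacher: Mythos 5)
Your proposal is correct and matches the paper's proof: the paper likewise unfolds \(\NdetDownarrowA\), commutes the nonvacuous \(\forall \mathcal{A}\) with \(\Downarrow\), and then observes that since every trace is compatible with some oracle, the two postconditions are pointwise equivalent by \cref{pred_is_fun}. Your explicit construction of the compatible oracle \(\mathcal{A}_0\) from the \(\branchnoarg\) events of \(k\) is just a spelled-out version of the step the paper states as ``every trace is compatible with some oracle,'' and it goes through as you describe.
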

\begin{proof}
  Unfold \(\NdetDownarrowA\), and commute \(\mathcal{A}\) with \(\Downarrow\) to push \(\forall \mathcal{A}\) into the righthand postcondition.
  Since every trace is compatible with some oracle, the postconditions are equivalent by \cref{pred_is_fun}.
\end{proof}
If a program \(p\) is predictor-constant-time, then it satisfies a specification of the form of the left-hand side of \cref{pred_to_oracle}.
Therefore, it satisfies the specification on the right-hand side (which says that it is constant-time).
Thus, predictor constant time implies constant time.
The following theorem says the converse: if \(p\) is constant-time, then it is predictor-constant-time as well.
\begin{theorem}\label{oracle_to_pred}
  Let \(g\) and \(f\) be any functions.
  There exists some \(\mathcal{P}\) such that for all \(p, m, \ell\),
  \[\left[\forall \mathcal{A}.\; (p, m, \ell, [], []) \NdetDownarrowA \{(t, f(g(t), \mathcal{A}))\} \right]\implies (p, m, \ell, [], []) \Downarrow \{(t, k) : k \in \mathcal{P} \circ g(t)\}.\]
\end{theorem}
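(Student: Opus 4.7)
The plan is to reduce the statement to a postcondition-level implication, then construct $\mathcal{P}$ explicitly from $f$, and finally verify correctness via a consistency lemma. Unfolding $\NdetDownarrowA$ in the hypothesis gives $\forall \mathcal{A}.\; (p, m, \ell, [], []) \Downarrow \{(t, k) : k \sim \mathcal{A} \Rightarrow k = f(g(t), \mathcal{A})\}$, and commuting the nonvacuous $\forall \mathcal{A}$ inside $\Downarrow$ yields $(p, m, \ell, [], []) \Downarrow \{(t, k) : \forall \mathcal{A}.\; k \sim \mathcal{A} \Rightarrow k = f(g(t), \mathcal{A})\}$. By weakening of postconditions, it suffices to exhibit $\mathcal{P}$ such that every $(t, k)$ in this inner set satisfies $k \in \mathcal{P}(g(t))$.

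For each $u$, let $T_u := \{k : \forall \mathcal{A}.\; k \sim \mathcal{A} \Rightarrow k = f(u, \mathcal{A})\}$. Define $\mathcal{P}(u)(k_0)$ by cases: if some $k \in T_u$ strictly extends $k_0$, use choice to pick one and return the event at position $|k_0|$ in $k$, replacing $\branch x$ by $\branchnoarg$; otherwise return $\en$.

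The heart of the proof is a consistency lemma: for any $k_0$ and any two $k_1, k_2 \in T_u$ both strictly extending $k_0$, the events at position $|k_0|$ agree after applying $\branch x \mapsto \branchnoarg$, so that the choice above is harmless on actual traces. To prove it, pick oracles $\mathcal{A}_i$ compatible with $k_i$, so $f(u, \mathcal{A}_i) = k_i$ by membership in $T_u$; if the two events at $|k_0|$ differed and were not both branches, I would construct a common oracle $\mathcal{A}$ by combining $\mathcal{A}_1$'s and $\mathcal{A}_2$'s branch-obligations along $k_1$ and $k_2$ (extended arbitrarily to be total). This combination is single-valued, because any potential conflict at a prefix $k'$ common to both traces must lie at the longest common prefix $\mathsf{lcp}(k_1, k_2)$, which in our setup has length exactly $|k_0|$, and a conflict there would require both events at $|k_0|$ to be branches---contradicting the hypothesis. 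Hence $\mathcal{A}$ is compatible with both $k_i$, forcing $k_1 = f(u, \mathcal{A}) = k_2$ and contradicting the assumed difference. A parallel argument (constructing a common oracle for $k_0$ and any strict extension) shows that if $k_0 \in T_u$ then no proper extension of $k_0$ lies in $T_u$, so $\mathcal{P}(u)(k_0) = \en$ correctly predicts the end of $k_0$. The main obstacle is the careful case analysis behind the consistency lemma; with it in hand, the verification that every $k \in T_u$ satisfies $k \in \mathcal{P}(g(t))$ is immediate.
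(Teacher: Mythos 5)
Your proposal is correct, but it proves the theorem by a genuinely different route than the paper. The paper's proof sketch notes that the left-hand postcondition does not imply the right-hand one and therefore argues over the structure of \(\Downarrow\), inverting the construction of \cref{pred_is_fun}; that remark concerns the postcondition for a \emph{fixed} oracle \(\mathcal{A}\). You instead merge the hypotheses over all oracles up front, via the property (stated and used by the paper in \cref{sec:two}) that nonvacuous \(\forall\) quantifiers commute with \(\Downarrow\), which reduces the theorem to a purely combinatorial inclusion: every \(k\) in \(T_u := \{k : \forall \mathcal{A}.\; k \sim \mathcal{A} \Rightarrow k = f(u,\mathcal{A})\}\) must be predicted by your \(\mathcal{P}(u)\). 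Your consistency lemma is exactly the crux, and it is true: if \(k_1 \neq k_2\) both lie in \(T_u\) and their branch-obligations did not conflict, a common compatible oracle \(\mathcal{A}\) would force \(k_1 = f(u,\mathcal{A}) = k_2\); so they must conflict, a conflict can only sit at the longest common prefix, and there both traces must continue with \(\branchnoarg\) events. Hence no member of \(T_u\) properly extends another, normalized next events agree at shared prefixes, and the choice-based \(\mathcal{P}(u)\) predicts all of \(T_u\). The step that makes a postcondition-level argument possible at all is that you define \(\mathcal{P}(u)(k_0)\) from traces \emph{in \(T_u\)} extending \(k_0\), not from \(f\) evaluated at some oracle merely compatible with \(k_0\); the latter, naive inversion of \cref{pred_is_fun} is unsound at the postcondition level, which is why the paper instead exploits the structure of \(\Downarrow\). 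The trade-off: your argument is shorter and isolates the combinatorial content, but it leans on the arbitrary-family intersection property of \(\Downarrow\) (itself provable only from the structure of \(\Downarrow\), e.g.\ via the small-step characterization in \cref{sec:equiv2}) and on a choice principle to define \(\mathcal{P}\); the paper's structural proof needs neither, which matters for the mechanization.
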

\begin{proof}
  This is significantly harder than \cref{pred_to_oracle}, in that the postcondition on the left-hand side does not actually imply the right-hand postcondition.
  The proof must actually use the structure of the \(\Downarrow\) predicate.
  However, the main idea is just to invert the procedure from \cref{pred_is_fun}.
\end{proof}

\subsection{Proving Programs Predictor-Constant-Time}\label{pppct}
\par For example, here is a specification saying that \verb|stack_swap| is predictor-constant-time.
\[\exists \mathcal{P}. \; \forall m,\ell. \; (\verb|stack_swap()|, m, \ell, [], []) \Downarrow \{([], k) : k \in \mathcal{P}\}.\]
Looking at the definition of \verb|stack_swap| (\cref{stack_swap_bad}) and the definition of \verb|swap| (\cref{sec:swap}), clearly the trace of \verb|stack_swap| will always be of the form $[\branch x;$ $\leak x;$ $\leak (x + 1);$ $\leak x;$ $\leak (x + 1)]$.
Then it is clear how to define \(\mathcal{P}\) to prove the specification of \verb|stack_swap|.

\par The full definition of \(\mathcal{P}\) would be awkward to write down here, but it is written in \cref{ss_pred_tree}.
Predictors tend to be unwieldy and unilluminating in any notation that follows the definition directly, which makes them inconvenient for source-program proofs.
On the other hand, we found them to be perfectly suitable for compiler proofs.
\cref{leakage_trees} discusses predictors' drawbacks in more detail and presents a nicer, mostly equivalent way to represent them.

Using oracles to model compiler-resolved nondeterminism naturally leads to compositional verification using standard omnisemantics program-logic constructions~\cite[\S 5]{Omnisemantics}.
The same is true for predictors, but it requires a trick that we detail in \cref{sec:nonmodular}.

\section{Constant-Time Compiler Specifications}\label{sec:compiler}
% We will now present compiler specifications that we can use to show that high-level constant-time specifications imply corresponding low-level constant-time specifications.
Constant-time specifications can be subtle, and it would not be straightforward to write a compiler theorem directly saying that every possible source-level constant-time theorem implies a corresponding target-level constant-time theorem.
So, our notion of compiler correctness will be intentionally more general than ``compiling constant-time programs to constant-time programs.''
\par To avoid distractions, we discuss compiler passes \(C : L \to L\) that work within a single language.

\subsection{The Simple Case: Compiler Correctness for Fully Deterministic Languages}
\begin{definition}\label{ctp1}
  \(C\) \emph{takes fixed-trace programs to fixed-trace programs} if
	\[\forall p.\;\exists\gamma_p.\;\forall k_H,m,\ell. \; (p, m, \ell, \text{\small{[]}}, \text{\small{[]}}) \Downarrow  \{(t, k) : k = k_H\} \implies (C(p), m, \ell, \text{\small{[]}}, \text{\small{[]}}) \Downarrow \{(t, k) : k = \gamma_p(k_H)\}.\]
\end{definition}
This way is the most direct to state that a compiler preserves deterministic constant time: high-level constant time is the hypothesis, and low-level constant time is the conclusion.
It is precisely analogous to the specification (Theorem 5.1) used for deterministic languages in \cite{CtCompCert}.
\par One can check that composing~\cref{ctp1} with our specification of the deterministic program \verb|swap| from \cref{sec:swap} yields a statement saying that \(C(\verb|swap()|)\) is constant-time.

\subsection{Compiler Correctness in the Presence of External Nondeterminism}\label{leakage_trans_defn}
Knowing that \(C\) takes fixed-trace programs to fixed-trace programs says nothing about its behavior on nondeterministic programs like \verb|login()|.
The issue is that \cref{ctp1} effectively has an implicit hypothesis: the source program can only have one possible leakage trace!
More precisely: since the leakage of \verb|login()| depends on runtime input, no single $k_H$ satisfying the hypothesis of \cref{ctp1} exists.

To support programs that perform I/O, we generalize the compiler specification to say \emph{unconditionally} that low-level leakage must be a deterministic function of high-level leakage.
\begin{definition}\label{ctp2}
  \(C\) \emph{admits leakage-transformation functions} if 
  \[\forall p.\, \exists \gamma_p.\, \forall Q, m, \ell. \; (p, m, \ell, [], []) \Downarrow Q \implies (C(p), m, \ell, [], []) \Downarrow \{(t, \gamma_p(k_H)) : Q(t, k_H)\}.\]
\end{definition}

If \(C\) admits a leakage-transformation function $\gamma_p$ for $p:=\verb|login|$, we can show that the specification of \verb|login| is preserved appropriately.
Applying \cref{ctp2} to the specification of \verb|login| written in \cref{login} yields the following low-level specification (where \(p := \verb|login()|\) for brevity).
\[\exists f. \; \forall m,\ell. \; (C(p), m, \ell, [], []) \Downarrow \{([\verb|IN | u, \verb|IN | a, \verb|OUT | r], \gamma_p \circ f(u, a == L(m,u))) : u,a,r \textrm{ are strings}\}.\]

\subsection{Compiler Correctness in the Presence of Compiler-Resolved Nondeterminism}\label{ltf_not_enough}\label{comparison}
As illustrated, requiring \(C\) to admit leakage-transformation functions guarantees that it preserves constant-time specifications, even when external nondeterminism is involved.
The reason is that the postcondition \(Q\) can encode a relation between \(k_H\) and the I/O log \(t\), and this relation remains meaningful on the low level because \emph{the compiler preserves \(t\)}, so we have common references to the same nondeterministic events.
In contrast, compiler-resolved nondeterministic events are not preserved by the compiler.
Thus, requiring \(C\) to admit leakage-transformation functions is no longer sufficient in the presence of compiler-resolved nondeterminism.
\par For example, in the case of \verb|stack_swap|, the high-level specification says that the leakage of \verb|stack_swap()| is a function of the high-level oracle, and the desired low-level specification is that the leakage of \(C(\verb|stack_swap()|)\) is a function of the low-level oracle.
If \(C\) merely admits leakage-transformation functions, then the leakage of \(C(\verb|stack_swap()|)\) depends only on the leakage of \verb|stack_swap()|---but the leakage of \verb|stack_swap()| depends in turn on the high-level oracle, so we cannot conclude that the leakage of \(C(\verb|stack_swap()|)\) depends only on the low-level oracle.
\par To conclude that the leakage of \(C(\verb|stack_swap()|)\) is a function of the low-level oracle, we must impose some additional constraint on the compiler.
There are two options.
(Note that the first is stricter; any compiler satisfying it automatically satisfies the second.)
\begin{itemize}[leftmargin=*]
\item[] \textbf{Specification 1:} The low-level leakage must only depend on the low-level oracle and the high-level leakage.
  In addition, the high-level oracle must depend only on the low-level oracle.
\item[] \textbf{Specification 2:} The function taking low-level oracle to low-level leakage must depend only on the function taking high-level oracle to high-level leakage.
\end{itemize}

If a compiler satisfies either Specification 1 or Specification 2, then it appropriately preserves the specification of \verb|stack_swap|.
However, Specifications 1 and 2 are subtly and significantly different.
Specification 1 presents deterministic resolution of stack-allocation addresses as an absolute assumption---arbitrary conclusions about the source program's behavior may depend on it.
\par For instance, suppose we want to show that our compiler appropriately preserves the specification of \verb|stackalloc_and_print|.
The conclusion follows easily for Specification 1: the specification of \verb|stackalloc_and_print| (\cref{sap_spec}) says that the printed value is a function of the high-level oracle, and then Specification 1 says that the high-level oracle is a function of the low-level oracle.
We conclude that the printed value is a function of the low-level oracle.
\par In contrast, Specification 2 does not imply that the compiler preserves the specification of \verb|stackalloc_and_print|.
Specification 2 only allows us to make conclusions about leakage, and the specification of \verb|stackalloc_and_print| has nothing to do with leakage.
% However, we should not conclude that Specification 1 is strictly more useful than Specification 2.% ; we will discuss the tradeoffs in more detail later.
% One may argue that the broad guarantee provided to programs by Option 1 goes too far and unduly restricts compiler implementations.
% Option 2 is designed to minimize restrictions beyond ruling out information leaks through timing.
% Indeed, in \cref{predictors_good} we discuss a significant restriction that Option 1 (but not Option 2) puts on compiler optimizations.

\subsection{Formalizing Specification 1: Requiring Oracle-Transformation Functions}
\begin{definition}\label{ctp5}
  \(C\) \emph{admits leakage-transformation and oracle-transformation functions} if
  \begin{align*}
    \forall p. \; \exists \gamma_p,\mathcal{A}_p. \; \; \; \; \; \; \; \forall Q, m, \ell. \; & (\forall \mathcal{A}. \; (p, m, \ell, [], []) \DownarrowA Q(\mathcal{A})) \implies \\
    & (\forall \mathcal{A}. \; (C(p), m, \ell, [], []) \DownarrowA \{(t, \gamma_p(k_H, \mathcal{A})) : Q(\mathcal{A}_p(\mathcal{A}))(t, k_H)\}).
  \end{align*}
\end{definition}
The key points are that high-level oracle \(\mathcal{A}_p(\mathcal{A})\) depends only on low-level oracle \(\mathcal{A}\), and low-level leakage \(\gamma_p(k_H, \mathcal{A})\) depends only on high-level leakage \(k_H\) and low-level oracle \(\mathcal{A}\).
Note: by \cref{equiv}, we could replace \(\DownarrowA\) with \(\NdetDownarrowA\) in \cref{ctp5}, and the definition would be equivalent.
%To avoid redundancy, in this section we only look at examples of applying the definition in terms of \(\DownarrowA\); things would work similarly with \(\NdetDownarrowA\).
\par In the case of passes with different source and target languages, \cref{ctp5} is still the correct notion; the low-level oracle just might look different.
For instance, in the degenerate case where the low-level language is deterministic, we take the low-level oracle to have type \verb|unit|.
This pattern applies to the Bedrock2 compiler phase that first produces machine code---on the level of machine code, there is no stack-allocation nondeterminism and hence no oracle.

\subsubsection{Example: \texttt{stack\_swap}}\label{sec:stack_swap_ex}
Applying \cref{ctp5} to the specification of \verb|stack_swap| from \cref{stack_swap_spec} (and letting \(p := \verb|stack_swap()|\) for brevity), we get an appropriate low-level specification:
\[\exists f. \; \forall \mathcal{A}. \; \forall m,\ell. \; (C(p), m, \ell, [], []) \DownarrowA \{([], \gamma_p(f \circ \mathcal{A}_p(\mathcal{A}), \mathcal{A}))\}.\]

\subsubsection{Example: \texttt{stackalloc\_and\_print()}}\label{sec:compile_sap}
Applying \cref{ctp5} to the specification from \cref{sap_spec} of \verb|stackalloc_and_print| (and letting \(p := \verb|stackalloc_and_print()|\)), we get
\[\exists f_\mathrm{io}, f. \; \forall \mathcal{A}. \; \forall m,\ell. \; (C(p), m, \ell, [], []) \DownarrowA \{([\verb|OUT | f_\mathrm{io} \circ \mathcal{A}_p(\mathcal{A})], \gamma_p(f \circ \mathcal{A}_p(\mathcal{A}), \mathcal{A}))\}.\]

\subsection{Formalizing Specification 2: Requiring Predictor-Transformation Functions}
Without assuming that the high-level oracle is a function of the low-level oracle, it seems very tricky to prove Specification 2.
The problem is that, a priori, the function taking high-level oracle to high-level leakage could be arbitrarily complicated.
For instance, suppose the high-level leakage is \([]\) when \(\mathcal{A}([\leak 0] * 1000) = 0\) and \([\leak 0]\) otherwise.
It is unclear how to construct a corresponding function taking low-level oracle to low-level leakage.
To make Specification 2 feasible to prove, we must restrict how the high-level leakage can depend on the high-level oracle.
\par There is a natural solution: the high-level leakage should only be allowed to depend on the high-level oracle via a predictor!
That is, our specification should say that there exists a \(\gamma_p\) such that for any \(\mathcal{P}\), if the high-level leakage is the function \(f_\mathcal{P}\) of the high-level oracle, then the low-level leakage is the function \(f_{\gamma_p(\mathcal{P})}\) of the low-level oracle.
%(Or in other words, the high-level program executes with some predictor \(\mathcal{P}\).)
%We formalize this as follows.
\begin{definition}\label{ctp3}
  \(C\) \emph{admits predictor-transformation functions} if 
  \begin{align*}
    \forall p. \; \exists \gamma_p. \; \; \; \; \; \; \; \forall Q, m, \ell. \; & (p, m, \ell, [], []) \Downarrow Q \implies \\
    & (C(p), m, \ell, [], []) \Downarrow \{(t, k_L) : \exists k_H. \; Q(t, k_H) \land (\forall \mathcal{P}. \; k_H \in \mathcal{P} \Rightarrow k_L \in \gamma_p(\mathcal{P}))\}.
  \end{align*}
\end{definition}
%Another interpretation of this theorem (without thinking of predictors as functions taking oracles to traces) is to say that, when compiler-resolved nondeterminism is involved, the natural generalization of a leakage trace is a predictor; thus,
Just as \cref{ctp2} says that the low-level trace is a function of the high-level trace, \cref{ctp3} says that the low-level predictor is a function of the high-level predictor.

We emphasize that \cref{ctp3} differs from \cref{ctp5} along \emph{two} dimensions.
First, \cref{ctp3} is an instance of Specification 2 rather than Specification 1.
This difference is unrelated to predictors; one could formalize Specification 2 without mentioning predictors.
(As discussed, such a formalization seems infeasible to prove.)
Second, \cref{ctp3} is (a priori) a restricted form of Specification 2: it only applies in the case where the high-level leakage depends on the high-level oracle via a predictor, not via an arbitrary function.
However, by \cref{pred_and_not}, the high-level leakage depends on the high-level oracle via an arbitrary function \emph{if and only if} it depends on the high-level oracle via a predictor.
That is: since predictor constant time is equivalent to constant time, and \cref{ctp3} says that the compiler preserves predictor constant time, \cref{ctp3} also says that the compiler preserves constant time.

\par Just like \cref{ctp5}, \cref{ctp3} generalizes straightforwardly to interlanguage passes.
In the special case where there is no compiler-resolved nondeterminism on the low level, specifying a low-level predictor is equivalent to specifying a low-level leakage trace; so in this special case, \cref{ctp3} states that the low-level \emph{leakage} is a function of the high-level predictor. %, as one would hope.
\subsubsection{Example: \texttt{stack\_swap}}
Assume that \(C\) admits predictor-transformation functions.
Then, applying \cref{ctp3} to the specification of \verb|stack_swap| from \cref{pppct} (and letting \(p := \verb|stack_swap()|\) for brevity), we obtain the following low-level specification.
\[\exists \mathcal{P}. \; \forall m, \ell. \; (C(p), m, \ell, [], []) \Downarrow \{([], k_L) : \exists k_H. \; k_H \in \mathcal{P} \land (\forall \mathcal{P}_0. \; k_H \in \mathcal{P}_0 \Rightarrow k_L \in \gamma_{p}(\mathcal{P}_0))\}.\]
The above postcondition clearly implies the simpler postcondition \(\{([], k_L) : k_L \in \gamma_p(\mathcal{P})\}\), which is exactly the statement that \(C(p)\) is predictor-constant-time.

\subsection{Extra Flexibility Afforded to the Programmer by Specification 1}\label{predictors_bad}
As discussed in \cref{comparison}, Specification 1 allows the source-level programmer to assume that a program will actually execute with some oracle, whereas Specification 2 makes no such promise!
As we saw with \verb|stackalloc_and_print|, Specification 2 only provides guarantees about leakage traces---on the other hand, Specification 1 allows us to guarantee that observable behavior of a program is independent of secrets, \emph{even when that behavior is not captured in the leakage trace}.

\subsection{Extra Flexibility Afforded to the Compiler by Specification 2}\label{predictors_good}
Specification 2 gives the compiler more freedom: compiler-resolved nondeterministic events \emph{need not actually be a function of previous leakage trace}.
Therefore, certain reordering optimizations that a compiler could perform violate Specification 1 but nevertheless satisfy Specification 2.
\par The problem with reordering optimizations is the following.
If the compiler reorders some parts of a program, then some nondeterministic event \(N\) may end up being after some event \(E\) in the target program, whereas in the source program \(N\) came before \(E\) and hence the outcome of \(N\) was not allowed to depend on \(E\).
A compiler satisfying Specification 1, then, is not permitted to perform such reordering in the case that on the low level, the outcome of \(N\) would depend on \(E\).
\par As a concrete example, Specification 1 does not permit a compiler to transform \(p\) into \(p'\).
\begin{alltt}
\(p\,\,:=\) \{ random as x; z = *w; print(x); \}
\(p':=\) \{ z = *w; random as x; print(x); \}
\end{alltt}
\par Here, for simplicity, we replace the Bedrock2 construct \verb|stackalloc as x| with the \verb|random as x| construct, which works exactly the same way in that it binds \verb|x| to a random number; the only difference is that it does no memory allocation. 
In \cref{app:compile-predictor}, we prove that a compiler transforming \(p\) into \(p'\) can satisfy Specification 2 (\cref{ctp3}) but not Specification 1 (\cref{ctp5}).

\subsection{Variations on Specification 2}
Specification 2 allows the compiler the extra freedom detailed in \cref{predictors_good}.
However, as discussed in \cref{predictors_bad}, Specification 2 only talks about functions taking oracles to \emph{leakage traces}, and therefore Specification 2 is only helpful when you want to prove things about leakage traces.
\par If we wanted to make Specification 2 more useful, we could have it talk about functions taking oracles to other things besides leakage traces.
An easy way to do this is just to modify our definition of leakage trace so that the ``other things'' that we are interested in appear in the leakage trace.
\par For example, to make Specification 2 preserve the specification of \verb|stackalloc_and_print|, we could parameterize the semantics \(\Downarrow\) over a function that says, given an I/O event, what should be the corresponding events (if any) added to the leakage trace.
By putting all the relevant information (and no more) in the leakage trace, we ensure that every theorem we care to prove about the low-level program is of the form ``the low-level program executes with some particular predictor,'' and hence Specification 2 is, from the programmer's perspective, equally useful to Specification 1.
\par This approach sounds inconvenient but workable.
Bedrock2 semantics are already parameterized over an I/O specification~\cite{lightbulb}.
Augmenting this specification to say which inputs and outputs are private---and then writing a compiler specification in the style of Specification 2---would allow for the benefits of both Specification 2 and Specification 1.
Thus, variations on \cref{ctp3} could allow for more useful compiler specifications than either of the specifications (\cref{ctp5} and \cref{ctp3}) we have presented here.
We leave the details for future work.

\section{Compiler Proof Techniques}\label{compiler_proof}
In this section, we outline how we proved that the Bedrock2 compiler satisfies the specifications we have discussed.
The compiler was already constant time-preserving; we did not need to modify it.
\par We completed proofs that the Bedrock2 compiler satisfies each of \cref{ctp5} and \cref{ctp3}.
In each case, we verified the usefulness of the specification by applying it to source-level constant-time theorems to yield assembly-level constant-time theorems.
We proved \cref{ctp5} in both the form that it is stated as well as the equivalent form in terms of \(\NdetDownarrowA\) rather than \(\DownarrowA\).
The two proofs were similar; one proceeded by induction on \(\Downarrow\) and the other by induction on \(\DownarrowA\).
\par In~\cref{wlt}, we show how to construct leakage-, predictor-, and oracle-transformation functions.
We wrote these directly in Gallina (Coq's dependently typed language) using well-founded recursion.
Then, we proved the correctness of the transformation functions (e.g, that \(\gamma_p\) and \(\mathcal{A}_p\) satisfy \cref{ctp5}) by induction on \((p, m, \ell, t, k) \Downarrow Q\), applying the introduction rules of $\Downarrow$ to symbolically execute each statement generated by the compiler pass $C$.
Detailed examples and explanations of such proofs for Bedrock2 and lambda calculi can be found in~\citet[\S 6]{Omnisemantics}.
\par We emphasize that omnisemantics makes compiler proofs easy, and our leakage-related additions did not require any changes to the structure of the Bedrock2 compiler proofs.
%We only added some extra hypotheses and conclusions.
%We provide little detail here about writing compiler proofs, because we would have little to say that is not already said in~\citet[\S 6]{Omnisemantics}.
% Finally, in \cref{compiler_effort}, we summarize how much effort it was to prove that our compiler satisfies these specifications.

\subsection{Simple Case: Leakage-Preserving Optimizations}\label{leakage_preserving}
Some optimization passes are \emph{leakage-preserving}: the target program always has the same leakage as the source program.
This strong property straightforwardly implies whichever compiler specification we want to prove; the leakage-/predictor-/oracle-tranformation functions are identities.

\subsection{Constructing Transformation Functions}\label{wlt}

\paragraph{Leakage-Transformation Functions $\gamma_p$}
The idea is that knowing the high-level leakage (especially, knowing which way branches go) allows us to simulate the execution of the source program, even though we know nothing about the initial state of the source program.
Then, since we know the function \(C\) taking the source program to the target program, knowing source-level control flow yields knowledge of target-level control flow and in turn target-level leakage.
\par Concretely, a leakage-transformation function walks through the execution of the source program, guided by the source leakage trace telling it which branches to take.
For each statement executed in the source program, it looks at the leakage of that statement and computes the corresponding target-level leakage.
For example, the address of a source-language load appears in the source-language leakage trace and is reinserted into the target-language leakage trace if the compiled code also contains a corresponding load.
On the other hand, addresses of target-level loads to stack slots only depend on the stack-frame address, which is known from control flow.

\paragraph{Predictor-Transformation Functions $\gamma_p$}
To construct a predictor-transformation function, we assume given a high-level predictor \(\mathcal{P}_H\), with which the source program executes; and a low-level leakage trace \(k_L\), which is a prefix of the target-program leakage trace.
The requirement is to predict what comes after \(k_L\).
To do so, the predictor-transformation function walks the source program very similarly to a leakage-transformation function, keeping track of the high-level leakage trace as it goes and using \(\mathcal{P}_H\) to see what comes next.
% When we reach compiler-resolved nondeterminism  in the source program, then \(\mathcal{P}_H\) will not be able to say which way the branch goes, so we may need to consult \(k_L\) to find the outcome of the nondeterminism.
When it has simulated the target program far enough to get to the end of \(k_L\), it checks what event comes next based on control flow of the target program.
If the next event is compiler-resolved nondeterminism, \(\branchnoarg\) is returned, otherwise the event itself (possibly by querying \(\mathcal{P}_H\) to get the corresponding high-level event).

\paragraph{Oracle-Transformation Functions $\mathcal{A}_p$}\label{wlot}
To construct an oracle-transformation function, we assume given a low-level oracle \(\mathcal{A}_L\), with which the target program executes; and a high-level leakage trace \(k_H\), which is a prefix of the source-program leakage trace.
The requirement is to return the outcome of the high-level compiler-resolved nondeterminism that comes after \(k_H\).
To do so, an oracle-transformation function walks through the source and target programs, guided by \(k_H\) and \(\mathcal{A}_L\).
When it reaches the end of \(k_H\), it looks at the next instance of compiler-resolved nondeterminism in the source program and determines the outcome of this choice by consulting the current low-level state (e.g., the value of a stack pointer or frame pointer) and the low-level oracle \(\mathcal{A}_L\).

\subsection{Transformation-Function Examples}\label{trans_fun}
\subsubsection{Introducing Nondeterminism in the Spilling Phase}\label{more_ndet}
Here we illustrate how to construct the function \(\gamma_p\) of \cref{ctp5} when \(C\) is the spilling phase of our compiler.
The spilling phase consists of a pair of functions \verb|spill_stmt| and \verb|spill_fun|.
The former compiles a statement.
The latter takes as input a function---that is, a list of argument names, a list of return-value names, and a function body---and returns the body of the compiled function.
\begin{verbatim}
Definition spill_fun argnames resnames body (words_needed : Z) : stmt :=
  stackalloc (bytes_per_word * words_needed) as fp;
  set_vars_to_reg_range argnames a0;
  spill_stmt body;
  set_reg_range_to_vars a0 resnames.
\end{verbatim}

We have corresponding leakage-transformation functions \verb|leak_stmt| and \verb|leak_fun|.
The leakage of a function body will depend on the value of the frame pointer \verb|fp|.
So, the \verb|leak_stmt| function needs to take the following inputs: the low-level oracle \verb|AL|, the high-level program \verb|sH| and its leakage \verb|kH|, the current value \verb|fpval| of the dedicated frame-pointer register \verb|fp|, and the low-level leakage that has accumulated so far \verb|kL_so_far|.
It will return the total low-level leakage that will have accumulated after \verb|sH| finishes executing.
In Coq:
\begin{verbatim}
Definition leak_stmt (AL : leakage -> word) (sH : stmt) (kH : leakage)
(fpval : word) (kL_so_far : leakage) : leakage := ...
\end{verbatim}
We consider how to implement \verb|leak_fun| given \verb|leak_stmt|.
The specification of \verb|leak_fun| is as follows: given a high-level function (\verb|argnames|, \verb|resnames|, \verb|body|), a low-level oracle \verb|AL|, the high-level leakage \verb|kH|, and the low-level leakage-so-far \verb|kL_so_far|, return the total leakage accumulated after the body of the spilled function finishes executing.
It is implemented as follows.

\begin{verbatim}
Definition leak_fun argnames resnames body (AL : leakage -> word) (kH : leakage)
    (kL_so_far : leakage) : leakage :=
  let fpval := AL kL_so_far in
  let kL' := [CompNonDet fpval] ++ leak_set_vars_to_reg_range fpval argnames in
  leak_stmt AL body kH fpval (kL_so_far ++ kL')
  ++ leak_set_reg_range_to_vars fpval resnames.
\end{verbatim}
To understand what is going on in \verb|leak_fun|, compare it to \verb|spill_fun|.
The function \verb|leak_fun| begins by querying the low-level oracle to see what value is put in the register \verb|fp|.
Then, it computes the leakage of the stackalloc call (which is \(\branch\verb|fpval|\)) and the leakage of the \verb|set_vars_to_reg_range| operation.
It passes the results to \verb|leak_stmt|, which returns the leakage that has accumulated after \verb|spill_stmt body| executes.
Then it finishes by appending the leakage of the \verb|set_reg_range_to_vars| operation.

\subsubsection{Resolving Nondeterminism in the FlatToRiscv Phase}
Here we illustrate how to construct the functions \(\mathcal{A}_p\) and \(\gamma_p\) of \cref{ctp5} when \(C\) is the FlatToRiscv phase of our compiler.
This phase consists of a function \verb|compile_stmt|, which takes as input \verb|mypos|, the relative position of the output code relative to a base position; \verb|stackoffset|, a value such that \verb|stackoffset + sp_val| is the highest used stack address (where \verb|sp_val| is the current value in the stack-pointer register \verb|sp|); and a statement \verb|sH| to be compiled.
It outputs a list of assembly instructions.
\begin{verbatim}
Fixpoint compile_stmt(mypos: Z)(stackoffset: Z)(sH: stmt): list Instruction :=
match sH with
| stackalloc n as x; body =>
  [Addi x sp (stackoffset-n)] ++ compile_stmt (mypos + 4) (stackoffset-n) body
... (* other cases omitted *) end.
\end{verbatim}
We have a corresponding leakage-transformation function \verb|leak_stmt|, which takes the following inputs: the high-level program \verb|sH| with leakage trace \verb|kH|, the values \verb|mypos| and \verb|stackoffset| (which have the same meanings as before), and the current value \verb|sp_val| of the stack-pointer register \verb|sp|.
It outputs the leakage of the compiled program.
Note that there is no compiler-resolved nondeterminism on the low level (i.e., the low-level oracle has type \verb|unit|), so \verb|leak_stmt| (unlike the \verb|leak_stmt| of \cref{more_ndet}!) does not need to take a low-level oracle as input.

\begin{verbatim}
Fixpoint leak_stmt (sH : stmt) (kH : leakage) (mypos stackoffset sp_val : word)
: list LeakageEvent :=
match sH with
| stackalloc n as x; body =>
  match kH with
  | CompNonDet _ :: kH' =>
    [ LeakAddi ] ++ leak_stmt body kH' (mypos + 4) (stackoffset - n) sp_val)
  | _ => (*should be impossible! kH is the leakage of an execution of sH*)
  end ... (*other cases omitted*) end.
\end{verbatim}
To understand \verb|leak_stmt|, compare it to \verb|compile_stmt|.
The function \verb|leak_stmt| begins by breaking the high-level leakage \verb|kH| into two pieces: the leakage of the line \verb|stackalloc n as x|, which is \verb|CompNonDet _|; and the leakage of the \verb|body| that comes after, which is \verb|kH'|.
\par We also have an \verb|oracle_stmt| function, which takes the same inputs as \verb|leak_stmt|; the only difference is that with \verb|leak_stmt| the input leakage \verb|kH| was expected to be the leakage of the whole high-level program \verb|sH|, whereas with \verb|oracle_stmt|, it is only expected to be some prefix of the leakage of \verb|sH|.
Then \verb|oracle_stmt| returns the output of the high-level oracle when given \verb|kH|.

\begin{verbatim}
Fixpoint oracle_stmt sH kH mypos stackoffset sp_val : word :=
match sH with
| stackalloc n as x; body =>
  match kH with
  | CompNonDet _ :: kH' => oracle_stmt body kH' (mypos+4) (stackoffset-n) sp_val
  | [] => sp_val + (stackoffset - n)
  | _ => (*this case should be impossible!*)
  end ... (*other cases omitted*) end.
\end{verbatim}
If \verb|kH| is the empty leakage trace, then \verb|oracle_stmt| has finished simulating the source and target programs, and it should simply output the value bound to \verb|x| by the high-level \verb|stackalloc| call.
Looking at \verb|compile_stmt|, we see that the value bound to \verb|x| is \verb|sp_val + (stackoffset - n)|, so it is what \verb|oracle_stmt| returns.
On the other hand, if \verb|kH| is not the empty leakage trace, then \verb|oracle_stmt| is not yet done simulating the source and target programs, so it passes the appropriate values to the recursive call and continues by simulating \verb|body|.

\subsection{A Summary of Our Compiler Proof Effort}\label{compiler_effort}

The whole project, to the point of submitting this paper, took about 12 months.
Most of the time was spent trying to find the right specification of the compiler to handle stack allocation and trying to find the right strategy for writing the compiler proof.
The adaptation of the Bedrock2 compiler proof was done by one person, who spent 18 hours per week on the project over 38 weeks.
% (40/prev-summer wk) * (4 prev-summer wk) + (40/iap wk) * (4 iap wk) + (10/semester wk) * (24 semester wk) + (20/cur-summer wk * 6 cur-summer wk) = 
% 160 + 160 + 240 + 120 = 320 + 360 = 680.
% 680 / (4 + 4 + 24 + 6) = 680 / 38 \approx 18
\par In total, we experimented with three approaches to verifying that the Bedrock2 compiler is constant time-preserving.
We proved, by induction on \(\DownarrowA\), that it satisfies the version of \cref{ctp5} written in terms of \(\DownarrowA\).
% TODO: instead of saying "the version of \cref{ctp5} with blah", it would be less confusing to just have two separate definitions.  if we have space.
Separately, we proved, by induction on \(\Downarrow\), that it satisfies the version of \cref{ctp5} written in terms of \(\NdetDownarrowA\).
Finally, we proved, by induction on \(\Downarrow\), that it satisfies \cref{ctp3}.
All three approaches required a similar amount of effort.
\par \cref{the_table} summarizes code written per compiler pass to prove \cref{ctp5} by induction on \(\DownarrowA\).
Since leakage- and oracle-transformation functions are structured similarly, we did not write them separately, instead writing one function returning a tuple (leakage, oracle output).
%This saved effort.\\
% git diff --numstat leakage_traces master, 11/13/2024, 22:19
\begin{table}
  \caption{Lines of code written to prove the Bedrock2 compiler satisfies \cref{ctp5}}
  \label{the_table}
\begin{tabular}{| c | c | c | c | c | c |}\hline
  Pass & Base (proof lines before) & Proof lines added, deleted & Transf.-func. lines \\\hline
  FlattenExpr   & 956             & + 48, - 44  &       0          \\\hline
  RegAlloc      & 1353             &  + 10, - 8 &     0           \\\hline
  UseImmediate  & 181              & + 18, - 26 & 0               \\\hline
  DeadCodeElim  &     386              &  + 140, - 160  & 222               \\\hline
  % for spilling, I estimate that 300 were due to tranformtion function...
  Spilling      &   2034         & + 292, - 236 &        300      \\\hline
  % FlatToRiscv: add together Common (+ 89, - 26), Functions (+255, -72), Literals (+15, -11), GoFlatToRiscv (+25, -5), RunInstruction (+23, -6), lsrc (+19, -11)
  % total is (+426, -131)
  % 943 + 2098 + 213 + 849 + 441 + 195
  FlatToRiscv   & 4793             & + 426, - 131 &      484      \\\hline
  % Composition: add together Pipeline (+355, -124), lowerpipeline (+61, -30)
  % 703 + 796
  composition   &  1499           & + 416, - 164  &      0          \\\hline
\end{tabular}
\end{table}

\par The proof work consisted mainly of small edits to existing proof scripts.
Note that most of the code is in the bodies of the leakage-/oracle-transformation functions.
Adapting the FlattenExpr, RegAlloc, and UseImmediate phases was little work, as they are leakage-preserving (\cref{leakage_preserving}).
\par We also encountered a lucky natural experiment to let us measure directly effort to adapt a new phase: the DeadCodeElim phase was added to the Bedrock2 compiler during our project, and we adapted that phase very recently.
Having the other phase proofs to reference, adapting the DeadCodeElim phase took only one day (about 5 hours).

% \par Now we discuss the proof that the Spilling phase satisfies \cref{ctp5} by induction on \(\DownarrowA\), for comparison with the proof by induction on \(\Downarrow\).
% There were 825 insertions (about 400 of which were trace-transformation function) and 192 deletions, starting from the same 1830 initial lines.
% Part of the discrepancy may be explained by the fact that we have the trace-transformation function for the proof by induction on \(\DownarrowA\) written in CPS but not the function for the proof by induction on \(\Downarrow\).

\section{Source-Program Case Studies}\label{case_studies}

The case studies described in this section followed the semantics-determinization approach with \(\DownarrowA\) (\cref{fix_stack_swap}).
All case studies are included in our Coq implementation.

\subsection{Common Compiler Gotchas for Constant-Time Cryptography}

\paragraph{Division by Constants}
Compilers choosing non-constant-time implementation strategies for straightforward source-level constructs has led to numerous security issues in cryptographic software.
Two recent and straightforward examples are the \emph{lower} compiler optimization generating non-constant-time machine instructions for division and modulo by a constant in the NIST-PQC-competition algorithms Kyber and HQC, leading to exploitable vulnerabilities~\cite{KyberSlash,DivideAndSurrender}.
Our semantics also afford the compiler this flexibility: division and modulo leak their arguments.
However, the recommended fix (choosing the right implementation strategy at the source level) can be proven constant-time using our semantics.
Specifically, we transcribed the fixed \verb|poly_tomsg| to Bedrock2 and proved that its leakage trace depends only on the memory addresses of the input and output, not the scalar values in these arrays.

\paragraph{Constant-Time Memory Comparison}

In most languages, unexpectedly clever compiler optimizations can detect functional behavior of constant-time idioms and replace them with (average-case faster) versions that leak information.
A recurring example of this phenomenon, and a staple constant-time function found in every good cryptography library's ``subtle'' section, is comparing two buffers for equality without leaking anything else about their contents.
(Naively returning ``not equal'' on the first mismatch would leak the length of the common prefix, allowing for incremental guessing.)
We proved that the \verb|memequal| function added to the Bedrock2 standard library as a part of the Bedrock2-Fiat-Crypto integration~\cite{garagedoor} indeed does not leak its in-memory inputs:

\begin{verbatim}
Definition memequal := func! (x,y,n) ~> r { r = $0;
  while n {  r = r | (load1(x) ^ load1(y)); x = x + $1; y = y + $1; n = n - $1 };
  r = (r == $0) }.
\end{verbatim}

We proved a source-level specification of \verb|memequal|.
Then we compiled it and applied a compiler theorem (following \cref{ctp5}), obtaining the following (abridged) assembly-level specification.
% Here's where to find this, in case anybody wants to double-check that I've transcribed it faithfully:
% https://github.com/OwenConoly/bedrock2/blob/1cadb3d7c4f1ee4d6c1415759518e50298139ee6/compiler/src/compilerExamples/memequal.v#L196
\begin{verbatim}
Lemma memequal_ct : forall x y n pos stack_pointer ret_addr,
  exists finalK, forall (xs ys : list word) stack_space initialMachine,
  (*hypothesis saying xs and ys are length-n arrays at addresses x and y*) ->
  (*hypothesis saying stack_space is big enough for memequal to run*) ->
  initialMachine.(getPc) = pos /\ initialMachine.(getLeakage) = [] ->
  map.get initialMachine.(getRegs) ra = Some ret_addr ->
  arg_regs_contain initialMachine.(getRegs) [x; y; n] ->
  (*hypothesis saying the instructions of memequal are at address pos*) ->
  runsTo initialMachine (fun finalMachine : RiscvMachine =>
    finalMachine.(getPc) = ret_addr /\ finalMachine.(getLeakage) = finalK).
\end{verbatim}
Note that the leakage \verb|finalK| is independent of, for instance, the content \verb|xs| and \verb|ys| of the arrays at locations \verb|x| and \verb|y|.
In fact, \verb|finalK| depends only on the arguments \verb|x|, \verb|y|, \verb|n|, the address \verb|pos| of the compiled \verb|memequal| function, the stack pointer \verb|stack_pointer|, and the return address \verb|ret_addr| (leaked when the machine jumps back there).

Asking Coq to \verb|Print Assumptions memequal_ct| yields only propositional and functional extensionality.
The only other trusted code here (other than Coq) is our leakage-augmented RISC-V semantics, encapsulated in the \verb|runsTo| predicate visible in the code above.

\begin{comment}
\begin{verbatim}
  Global Instance spec_of_memequal : spec_of "memequal" :=
    fun functions =>
      exists f,
      forall (x y n : word) xs ys Rx Ry,
      forall k t m,
        m =* xs$@x * Rx /\ m =* ys$@y * Ry /\ length xs = n :>Z /\ length ys = n :>Z ->
        WeakestPrecondition.call functions "memequal" k t m [x; y; n]
          (fun k' t' m' rets =>
             exists r,
               rets = [r] /\
                 f x y n ++ k = k' /\ m=m' /\ t=t' /\ (r = 0 :>Z \/ r = 1 :>Z) /\
                 (r  = 1 :>Z <-> xs  = ys)).
\end{verbatim}

\begin{verbatim}
  (*ctfunc! "memequal" (x y n : word) | / | (xs ys : list byte) (Rx Ry : mem -> Prop) ~> r,
    { requires t m := ;
      ensures t' m' := m=m' /\ t=t' /\ (r = 0 :>Z \/ r = 1 :>Z) /\
                       (r  = 1 :>Z <-> xs  = ys) }.*)
\end{verbatim}
\end{comment}

\subsection{Password-Based Login}

Our next software case study is a password prompt inspired by \verb|agetty| used for Linux console login.
The program reads a password character-by-character and compares it against a reference value, returning whether the password was correct, without leaking either operand. % of the comparison.
While the conceptual task of this example is just to call \verb|memequal|, including some application context helps illustrate the different types of nondeterminism supported by our updated Bedrock2 compiler.

First, the program stack-allocates a temporary buffer for the user input.
Per $\DownarrowA$, the allocated address is independent of the password stored in program memory.
This fine point is important, as \verb|getline| writing the entered password to that buffer leaks the address of the buffer.

Second, \verb|getline| reads the user input by calling \verb|getchar| until a newline is entered.
As a result, the program branches on every character of the entered password compared with a newline!
Nevertheless, we prove that the leakage of \verb|getline| only depends on the \emph{length} of the input \verb|bs|, as well as the length \verb|n| and address \verb|dst| of the destination memory buffer.
More specifically: for any \(\mathcal{A}\),
\[
	\exists f.\; \forall m, \ell. \;
	(\mathtt{getline(dst, n)}, m, \ell, [], [])
	\DownarrowA
	\{(\mathtt{getline\_io}(\mathtt{bs}), f(|\mathtt{bs}|, \mathtt{dst}, \mathtt{n})) \mid \textrm{passwords } \mathtt{bs}\}.
\]
Note that the distinction between values of \verb|bs| (which are secret) and their lengths is program-specific, not understood by the compiler.
Nevertheless, preserving leakage-free execution during compilation is a genuine obligation: replacing each character-newline comparison with a 256-entry jump table as a misguided optimization would leak every character's value, not just the length.

Finally, the login program calls \verb|memequal| to compare the entered and actual passwords.
The overall postcondition establishes that the return value is \verb|1| iff the password from \verb|getline| is correct, but the leakage trace can be predicted based on memory addresses and lengths alone.

To further exercise corner cases of the semantics, we also proved a version where \verb|memequal| is used to compare the correct password against the entire buffer passed to \verb|getline|, not just the initialized part. In Bedrock2, unlike ISO C, branching on uninitialized values is defined behavior, so this program is legal, and the postcondition of \verb|memequal| still guarantees that the program does not leak any information about the password, even though the uninitialized bytes may depend on it.

\begin{verbatim}
Definition password_checker := func! (password) ~> ret {
  stackalloc 8 as x; (* password is 8 characters *) unpack! n = getline(x, $8);
  unpack! ok = memequal(x, password, $8); ret = (n == $8) & ok }.
\end{verbatim}

\begin{comment}
\begin{verbatim}
exists bs es l, rets = [l] /\ k' = f dst n l ++ k /\
        (bs$@dst * es$@(word.add dst l) * R) m' /\
        length bs = l :> Z /\
        length bs + length es = n :> Z /\
        t' = getline_io n bs ++ t 
\end{verbatim}
\end{comment}

Implementing and proving this example (including \verb|getline|) took less than one workday, with less than an hour spent on leakage proofs.

\subsection{Output Without Leaking: Semiprime Generator}

We opted not to consider program inputs or outputs as leaked, since such direct information flows can be ruled out in the original Bedrock2 program logic.
This choice also gives us a chance to model cryptographic code in a useful way, where we assume an adversary does not have the computational power to e.g. factor a large secret key that has been output.
As an example of this strategy, we proved that the following program that reads two prime numbers as runtime input and outputs their product does not leak the factors:
\begin{verbatim}
Definition semiprime := func! () ~> (p, q) {
  p = getprime(); q = getprime(); n = p * q; output(n) }.
\end{verbatim}
Specifically, we established the following specification: for any \(\mathcal{A}\),
\[
  \exists f. \; \forall m, \ell. \; (\verb|semiprime()|, m, \ell, [], []) \DownarrowA \{([\mathtt{getprime}(p), \;\mathtt{getprime}(q),\; \mathtt{output}(pq)], f())\}.
\]

\begin{comment}
#[global] Instance ctspec_of_semiprime : spec_of "semiprime" :=
  fun functions => exists f, forall t m k,
      WeakestPrecondition.call functions "semiprime" k t m []
      (fun k' t' m' rets => exists p q, rets = [p;q] /\ k' = f ++ k /\ m' = m
        /\ t' = [getprime_event p; getprime_event q; output_event (word.mul p q)]++t).
\end{comment}

\subsection{Software Constant-Time-Verification Effort}

Focusing on semantics and compiler verification, we have not invested any effort to optimize or streamline the software proof flow.
Nevertheless, for all programs we have proven constant-time using any of our semantics, the time and code required to prove the actual constant-time property is a small fraction (around 10\%) of proving memory safety and defined behavior.
%Neither task enjoys elaborate proof automation in our current framework, but proof obligations from memory safety are about as numerous and much more challenging than constant-time-related ones.
In particular, Coq's built-in proof-context tracking can automatically prove that an expression does not depend on a particular variable.
%Even in the \verb|getline| example where the constant-time property required case analysis to establish, the same case analysis (on whether the next character is a newline) already appeared in the syntax of the program and its memory-safety proof.
Most examples took less than one working day to implement, specify, and prove, except for Kyber message decoding, which took less than one workweek.
%Thus our constant-time rules for nondeterministic programs are workable and not overly or surprisingly burdensome in the context of software verification.

\section{Related Work}
A variety of projects have applied formal verification to different types of compilers for general-purpose languages, most notably CompCert~\cite{Leroy-backend} and CakeML~\cite{CakeML}.
Some projects have been more specialized to the domain of cryptography, where the idea of constant time originated.
Fiat Cryptography~\cite{FiatCrypto} produces (with a Coq-verified compiler) performance-critical inner loops that are straightline code without explicit memory access and thus trivially verified as constant time.
The libraries HACL$^*$~\cite{HACL} and EverCrypt~\cite{EverCrypt} based on the F$^*$ language~\cite{Fstar} tackle a broader range of cryptographic functionality, where constant time must be established explicitly.
%A type-based approach is taken to ruling out timing leaks, where public and private values have different static types with different allowed operations.
There are no associated mechanized proofs about compilation, not even for functional correctness.

\citet{SimpleCtComp} initiated mechanized proof that compilers preserves constant time, first for a simple compiler inspired by Jasmin~\cite{Jasmin}.
We already discussed work~\cite{CtCompCert} applying similar techniques to most phases of CompCert.
In general, this past work did not apply to nondeterministic programs.
%Moreover, before even beginning to prove that compiler phases preserve generalizations of constant time, it is necessary to formulate those generalizations, which makes up an important contribution of our work.

\citet{StructuredLeakage} introduce the idea of \emph{structured leakage}, where leakage traces are accumulated in purpose-specific data structures. %, with combinators that weave together the leakages of particular commands' subcommands.
%This machinery provides good ergonomics for verifying complex compiler passes.
%For instance,
They handle allocation of stack memory, one of our main running examples.
However---as with CompCert---their source language provides no way to observe memory-allocation addresses, sidestepping some of the central challenges that motivate our work.
In contrast, our approach permits proving a program constant-time even if it branches on the stack pointer.
We also handle other kinds of nondeterminism, most interestingly input-output. %, where it was not clear beforehand how even to state the correctness condition of a compatible compiler or security property of a source program.

We are not aware of prior work involving specification of constant-time properties either (1) as single-copy properties or (2) using anything analogous to our ``predictors.''
Also, most prior work on verification of compiler security properties---including all works cited in this section---uses small-step semantics and thus (unlike our approach) supports nontermination without difficulty.

We did not consider fine-grained security policies, merely tagging each value ``private'' or ``public.''
\citet{Broberg} overview a broader class of security policies.

\paragraph{Constant Time With Nondeterminism}
\citet{Sison_Murray_19} prove for a realistic concurrent program that threads do not leak to other threads via shared memory or timing leaks.
Their work significantly differs from ours in that they work with security properties as hyperproperties rather than single-copy properties.
A significant part of their contribution is a method for proving a certain security hyperproperty by considering only two executions at a time (one source, one target) rather than four at a time (two source, two target).
We avoid relating imperative executions at all.

\citet{refiners} also handle security properties in the presence of concurrency, but they focus only on static analysis, not considering compiler specifications or proofs.
To do so, they parameterize their semantics over \emph{refiners}  to make it deterministic, just like we parameterize our semantics over oracles in \cref{fix_stack_swap}.
We used oracles to model stack allocation, and they used refiners to model scheduling (for concurrency), but it is the same idea.

\paragraph{Speculation-Aware Security Properties}
In general, cryptographic constant time is an insufficient security condition in the face of hardware optimizations like speculation, as demonstrated with Spectre attacks~\cite{Spectre}.
Tools based on programming languages and compilers have been suggested that help regain timing security in the face of speculation.
One example is Blade~\cite{Blade}, which employs a type system to track information flows and ensure sufficient placement of relatively expensive source-level mitigations.

Recently, both \citet{spec2} and \citet{spec1} have demonstrated how to verify compilers against speculation-aware semantics.
Both consider an adversary's actions to be a source of nondeterminism,
modeled by parameterizing the semantics over \emph{directives} (analogous to oracles and refiners), which ``model the ability of an adversary to influence program execution''~\cite{spec2}. % quote from barthe paper
These speculation-aware notions of timing security may also benefit from generalization to different kinds of nondeterminism, as we have explored for traditional constant time.

\section{Conclusion}

Characterizing timing security of interactive programs requires going beyond cryptographic constant time, even as the ideas behind that established notion remain relevant.
We introduced a series of approaches to specifying security of interactive programs, proving appropriate relations between approaches and showing how compiler verification can be adapted to show preservation of specifications in these different styles.
Ideas for building on our results include connecting security proofs to include end-to-end results with respect to specific processors, bringing in some notion of computational complexity to rule out unrealistic adversaries, and extending to proof of new phases that ingest security specifications and use them to drive additional optimizations.

\section*{Artifact-Availability Statement}
\addcontentsline{toc}{section}{Artifact-Availability Statement}
An artifact~\cite{artifact} for this paper was evaluated and is freely available.

\begin{acks}
  This research was supported by the \grantsponsor{GS4}{National Science Foundation}{} under grant~\grantnum{GS4}{CCF-2130671}.
  Any opinions, findings, and conclusions or recommendations expressed in this material are those of the authors and do not necessarily reflect the views of the National Science Foundation.
\end{acks}

\appendix
\section{Separating Nondeterminism from Leakage Breaks Causality}\label{naive_bad}
Consider defining functions \(B\) and \(L\) that, given a leakage trace \(k\), return the \(\branchnoarg\) events and \(\leaknoarg\) events respectively (filtering out other entries).
Consider the following definition: a program is constant-time if there exists a function \(f\), depending only on the program's public values, such that given any leakage trace \(k\) of the program, we have \(L(k) = f(B(k))\).
\par This definition likely comes across as reasonable; indeed, it is perhaps the simplest possible attempt at generalizing regular constant time to accomodate compiler-resolved nondeterminism.
However, it is a bad definition.
It is trying to get at the fact that the leakage trace of a constant-time function should be a function of the underlying oracle \(\mathcal{A}\) resolving the nondeterminism.
The problem, however, is that \(B(k)\) is not solely a function of \(\mathcal{A}\); in particular, \(B(k)\) may depend on private values.
\par As a demonstration, consider the following function, and consider \verb|x| to be a private argument.
\begin{verbatim}
countdown(x) { while (x--) { stackalloc 1 as y } }
\end{verbatim}
Now, clearly we do not want to say that \verb|countdown| is constant-time.
It is branching on the private input \verb|x|.
Yet, we claim that it satisfies the flawed definition of constant time.
\par The leakage trace of \verb|countdown| will be of the form
\[[\leak 1; \branch y_1; \leak 1; \branch y_2; \cdots;  \leak 1; \branch y_x] \app [\leak 0].\]
So, the required function \(f\) is just \(\lambda b. \; [\leak 1] * (\length b) \app [\leak 0]\).
Indeed, for any \(k\) that is a leakage trace of \verb|countdown|, we have \(L(k) = f(B(k))\).
Thus \verb|countdown| is constant-time according to the flawed definition.
\par We could frame the issue with the flawed definition as one of retrocausality.
When we just require the existence of the function \(f\) with \(L(k) = f(B(k))\), we allow an event \(e\) in the leakage events \(L(k)\) to depend arbitrarily on \(B(k)\); in particular, it can depend on the length of \(B(k)\), which in turn may depend on how the program executes even \emph{after} \(e\) is added to the leakage trace.
We conclude that the problem with separating the leakage events \(L(k)\) from the nondeterminism \(B(k)\) is that we lose any information about relative ordering between leakage events and nondeterministic events.
\par Finally, we remark that we could have gone similarly astray in defining the predicate \(\DownarrowA\).
Suppose that, in our definition of \(\Downarrow_\mathcal{A}\), we had chosen to encode \(\mathcal{A}\) not as a function \(\mathcal{A} : X \to Y\) but rather as a list of decisions \(\mathcal{A} \in Y^c\), where \(c\) is the number of oracle calls made during the program's execution.
When the length \(c\) of \(\mathcal{A}\) is not equal to the number of oracle calls made by a program \(p\), we say that \((p, m, \ell, t, k) \DownarrowA Q\) vacuously holds for every \(Q\).
\par If we had defined \(\DownarrowA\) in this way, then we would have been able to use the same hack to show that \verb|countdown| is ``constant-time'' according to \(\DownarrowA\).
That is, if we had defined \(\DownarrowA\) in this way, it would be true that
\[\exists f. \; \forall \mathcal{A}. \; \forall x. \; \forall m,\ell. \; (\verb|countdown(x)|, m, \ell[\verb|x| := x], [], []) \DownarrowA \{([], f(\mathcal{A}))\}.\]
The point is that the leakage \(f(\mathcal{A})\) appears to be independent of the value \(x\); but in fact it is not, since the length of \(\mathcal{A}\) depends on \(x\)!
Thus we see that correctly defining \(\DownarrowA\) is subject to the same subtlety as with our flawed definition of constant time at the beginning of this section.

\section{An Equivalence Result}\label{sec:equiv2}
Here is the equivalence result between \(\DownarrowA\) and \(\NdetDownarrowA\).
We originally presented this as \cref{equiv}.
\begin{theorem}\label{equiv2}
  For all \(p,m,\ell,Q\),
  \[\left[\forall \mathcal{A}. \; (p, m, \ell, [], []) \DownarrowA Q(\mathcal{A})\right] \iff \left[\forall \mathcal{A}. \; (p, m, \ell, [], []) \NdetDownarrowA Q(\mathcal{A})\right].\]
\end{theorem}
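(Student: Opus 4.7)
The plan is to prove each direction separately. The backward implication ($\NdetDownarrowA \Rightarrow \DownarrowA$) is the routine one, amounting to a strengthened induction on the $\Downarrow$ derivation. The forward implication is the substantive one: it requires combining $\DownarrowA$ derivations for different oracles into a single $\Downarrow$ derivation, and its main obstacle is the stack-allocation case.

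For the backward direction, fix $\mathcal{A}$ and unfold $\NdetDownarrowA$ to obtain $(p,m,\ell,[],[]) \Downarrow \{(t,k) : k \sim \mathcal{A} \Rightarrow Q(\mathcal{A})(t,k)\}$. To enable induction, I would generalize the claim to arbitrary intermediate configurations: if $k \sim \mathcal{A}$ and $(p,m,\ell,t,k) \Downarrow \{(t',k') : k' \sim \mathcal{A} \Rightarrow R(t',k')\}$, then $(p,m,\ell,t,k) \DownarrowA R$. The proof then goes by induction on the $\Downarrow$ derivation; all non-stackalloc rules lift straightforwardly via the induction hypothesis. In the stackalloc case, the $\Downarrow$ rule supplies a subderivation for every allocation address $a$; select the one with $a = \mathcal{A}(k)$, which is exactly the address forced by the $\DownarrowA$ stackalloc rule. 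Since $k \sim \mathcal{A}$, extending by $\branch \mathcal{A}(k)$ preserves compatibility with $\mathcal{A}$, so the premise needed to apply the induction hypothesis to the continuation is at hand.

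For the forward direction, the plan is to prove the stronger statement: if $\forall \mathcal{A}.\,(p,m,\ell,t,k) \DownarrowA R_\mathcal{A}$, then $(p,m,\ell,t,k) \Downarrow \{(t',k') : \forall \mathcal{A}.\, k' \sim \mathcal{A} \Rightarrow R_\mathcal{A}(t',k')\}$. I would induct on one chosen $\DownarrowA[\mathcal{A}_0]$ derivation, while retaining the universal hypothesis for use with other oracles. The main obstacle is the stackalloc case: the $\Downarrow$ rule demands a subderivation for \emph{every} allocation address $a$, while any single $\DownarrowA[\mathcal{A}_0]$ hypothesis directly supplies one only for $a = \mathcal{A}_0(k)$. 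The saving observation is that once the trace is extended by $\branch a$, any oracle $\mathcal{A}$ with $\mathcal{A}(k) \ne a$ is incompatible with every continuation trace, so the conjunct $k' \sim \mathcal{A} \Rightarrow R_\mathcal{A}(t',k')$ holds vacuously for such $\mathcal{A}$. It therefore suffices to assemble continuation $\DownarrowA$-derivations only for oracles with $\mathcal{A}(k) = a$.

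The supporting technical device is an oracle-replacement lemma: if $\mathcal{A}$ and $\mathcal{A}'$ agree on all trace prefixes extending a fixed $k$, then $(p,m,\ell,t,k) \DownarrowA Q$ iff $(p,m,\ell,t,k) \DownarrowA' Q$. This is proved by a straightforward induction, since $\DownarrowA$ only consults $\mathcal{A}$ at stackalloc rules on the current trace, which always extends $k$. Using this lemma, for each allocation address $a$ and each oracle $\mathcal{A}$ with $\mathcal{A}(k) = a$, I can obtain the required post-stackalloc $\DownarrowA$-derivation by transporting it from the hypothesis for any oracle $\mathcal{A}^*$ that agrees with $\mathcal{A}$ on trace extensions of $k :: \branch a$ but has $\mathcal{A}^*(k) = a$. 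This supplies the full family of $\DownarrowA$-derivations needed to apply the induction hypothesis, closing the forward direction.
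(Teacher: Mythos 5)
Your backward direction is fine: the strengthened induction on the $\Downarrow$ derivation, selecting the branch $a = \mathcal{A}(k)$ at \texttt{stackalloc} and noting that $k :: \branch{\mathcal{A}(k)}$ remains compatible with $\mathcal{A}$, is essentially the paper's route (its Lemmas on monotonicity and on strengthening the postcondition with $k \sim \mathcal{A}$, packaged as one induction).

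The forward direction, however, has a genuine gap in the induction scheme. You propose to induct on one chosen derivation of $(p,m,\ell,t,k) \Downarrow_{\mathcal{A}_0} R_{\mathcal{A}_0}$ while carrying the universal hypothesis. But in the \texttt{stackalloc} case, that derivation's premises cover only the branch $a_0 = \mathcal{A}_0(k)$, whereas the $\Downarrow$ rule you must apply demands a subderivation for \emph{every} address $a$. For $a \neq a_0$, your oracle-replacement lemma and inversion do give you a family of continuation derivations $(\,\mathtt{body},\ldots,k::\branch a\,) \Downarrow_{\mathcal{A}^*} \cdots$ for oracles $\mathcal{A}^*$ with $\mathcal{A}^*(k)=a$ --- but these are subderivations of \emph{other} hypotheses, not of the $\mathcal{A}_0$-derivation you are inducting on, and they can be arbitrarily larger than it (e.g.\ a program that loops for a number of steps determined by the allocated address). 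So at exactly the point where you say ``this supplies the full family of derivations needed to apply the induction hypothesis,'' there is no induction hypothesis available: the appeal is circular, since no structural or height measure decreases when you switch derivations. This is precisely the obstacle the paper flags --- no single hypothesis even yields $(p,m,\ell,[],[]) \Downarrow \{k : \mathsf{true}\}$, and the hypotheses must be combined in a way that does not follow the inductive structure of any one judgment. The paper's actual proof takes a substantial detour to make this combination possible: it introduces a small-step semantics and a notion of possible executions, proves that $(p,s)\Downarrow Q$ is equivalent to ``every possible execution from $(p,s)$ satisfies $Q$'' (its Lemma on big- vs.\ small-step), where the hard direction is a well-founded induction over a nonobvious order $\sqsubset$ restricted to configurations all of whose executions satisfy some postcondition; only then does the per-execution argument (pick the oracle $\mathcal{A}_E$ describing $E$'s choices to get termination, and the fixed $\mathcal{A}$ for the postcondition when $E$ is compatible with it) go through. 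Your proposal would need some comparable device --- a semantic termination measure or an execution-level reformulation --- before the forward direction can be closed.
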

\par Before proving \cref{equiv2}, we note that, by expanding the definition of \(\NdetDownarrowA\) in the theorem, we obtain an immediate corollary.
\begin{corollary}\label{equiv_cor}
  For all \(p, m, \ell, Q\),
  \[\left[\forall \mathcal{A}. \; (p, m, \ell, [], []) \DownarrowA Q\right] \iff (p, m, \ell, [], []) \Downarrow Q.\]
\end{corollary}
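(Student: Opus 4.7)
The statement unfolds (by substituting the definition of $\NdetDownarrowA$) to
\[
  \forall \mathcal{A}. \; (p,m,\ell,[],[]) \DownarrowA Q(\mathcal{A}) \iff \forall \mathcal{A}. \; (p,m,\ell,[],[]) \Downarrow \{(t,k) : k\sim\mathcal{A} \Rightarrow Q(\mathcal{A})(t,k)\},
\]
so I would handle the two directions separately.

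The reverse direction ($\Leftarrow$) is routine. Fix $\mathcal{A}$ and induct on the $\Downarrow$ derivation. For every rule other than \texttt{stackalloc}, the $\DownarrowA$ rule is identical in shape, and the IH converts each premise. In the \texttt{stackalloc} case, the $\Downarrow$ rule's universal quantification over $a$ subsumes the single value $a = \mathcal{A}(k)$ required by $\DownarrowA$. Because any trace arising from a $\DownarrowA$ execution is $\mathcal{A}$-compatible by construction, the guard $k\sim\mathcal{A}$ in the postcondition always holds, so it simplifies to $Q(\mathcal{A})$ as needed.

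The forward direction ($\Rightarrow$) is the hard one. Fix $\mathcal{A}$; we have $(p,m,\ell,[],[]) \Downarrow_\mathcal{B} Q(\mathcal{B})$ for every oracle $\mathcal{B}$, and want $(p,m,\ell,[],[]) \Downarrow \{(t,k) : k \sim \mathcal{A} \Rightarrow Q(\mathcal{A})(t,k)\}$. Induct on the $\DownarrowA$ derivation, keeping the universally quantified hypothesis in scope. All cases but \texttt{stackalloc} are mechanical. At a \texttt{stackalloc} reached with leakage trace $k$, the $\Downarrow$ rule demands the body's behavior for \emph{every} $a$ and $m_\textit{new}$. For $a = \mathcal{A}(k)$ the inductive hypothesis directly applies. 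For $a \neq \mathcal{A}(k)$, every trace out of the body begins with $\branch a$ at position $k$ and so cannot be $\mathcal{A}$-compatible; the postcondition's implication becomes vacuous, and it suffices to exhibit $\Downarrow$ of the body against a trivially-true postcondition. The modified oracle $\mathcal{B} := \mathcal{A}[k \mapsto a]$ witnesses the body under $\Downarrow_\mathcal{B}$, and a key \emph{irrelevance} observation—that a $\Downarrow_\mathcal{B}$ derivation starting at trace $k_0$ depends on $\mathcal{B}$ only at prefixes strictly extending $k_0$—promotes this to ``body $\Downarrow_{\mathcal{B}'}$ at $k\app[\branch a]$ for every $\mathcal{B}'$.''

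The main obstacle is that the body derivation at $k\app[\branch a]$ for $a\neq\mathcal{A}(k)$ is not a sub-derivation of the $\DownarrowA$ tree we are inducting on, so the structural IH does not apply. I would discharge this by strengthening the induction: prove \cref{equiv_cor} (the unparameterized $\forall\mathcal{A}.\;\DownarrowA Q \Rightarrow \Downarrow Q$) as an independent lemma by well-founded induction on derivation height, combined with the irrelevance observation to harvest bounds uniformly across oracles; then apply it to the body with the trivially-true postcondition. Once that auxiliary fact is available for each $a$, the $\Downarrow$-\texttt{stackalloc} rule closes the case, and the overall induction finishes.
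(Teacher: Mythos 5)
Your easy direction matches the paper's (it is \cref{lemma1} plus \cref{lemma2} and weakening), and your framing is close, though note that what you call an ``unfolding'' is really the statement of \cref{equiv2}: the corollary's right-hand side is literally \(\Downarrow Q\), and bridging it to \(\forall \mathcal{A}.\ \Downarrow \{(t,k) : k \sim \mathcal{A} \Rightarrow Q(t,k)\}\) uses the (easy but non-definitional) facts that nonvacuous universal quantifiers commute with \(\Downarrow\) and that every trace is compatible with some oracle. You also correctly locate the crux of the forward direction: for \(a \neq \mathcal{A}(k)\) the body's behavior is not a subderivation of the \(\DownarrowA\) tree, and your irrelevance observation (a \(\DownarrowB\) derivation consults \(\mathcal{B}\) only at traces extending its starting trace---including that trace itself, not only strict extensions) is indeed what lets one splice oracles and harvest, for every \(\mathcal{B}\), a \(\DownarrowB\) judgment for the body with a trivial postcondition.

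The genuine gap is the step that is supposed to make the recursion well-founded. ``Well-founded induction on derivation height, combined with the irrelevance observation to harvest bounds uniformly across oracles'' does not go through: the \(\DownarrowB\) derivation of the body at \(k \app [\branch a]\) is carved out of a top-level derivation under a \emph{different} oracle, whose height bears no relation to the height of the derivation you are inducting on; the irrelevance observation constrains only \emph{which oracle queries} a derivation makes and says nothing about heights, so it cannot supply the needed bound. A supremum-over-oracles measure fails strict decrease at limit stages, and a genuinely uniform height bound across all oracles would amount to a K\H{o}nig-style compactness argument depending on finite branching, which the paper deliberately avoids (its proof ``does not rely on e.g.\ finiteness of the machine-state type''). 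This is exactly the obstruction the paper flags with ``what should we do induction on?'', and its actual proof abandons induction over big-step derivations for this direction: it proves \cref{big_small}, characterizing \((p,s) \Downarrow Q\) as ``every possible small-step execution from \((p,s)\) satisfies \(Q\)'', where the small-to-big direction uses well-founded induction over the bespoke order \(\sqsubset\) (reversed step relation together with the sequence-prefix relation), well-founded only on configurations all of whose executions satisfy some postcondition; the corollary then follows, via \cref{equiv2}, by the ``every execution's choices are described by some oracle \(\mathcal{A}_E\)'' argument that your irrelevance observation gestures at. To repair your plan you would need either this small-step detour or an actual proof of a uniform bound across oracles; the proposal supplies neither.
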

\par We also observe a non-corollary.
Fix an \(\mathcal{A}\).
It is \emph{not} generally true that \((p, m, \ell, [], []) \DownarrowA Q \iff (p, m, \ell, [], []) \NdetDownarrowA Q\).
For instance, suppose \(p\) is a well-behaved terminating program satisfying \(Q\) when it executes compatibly with \(\mathcal{A}\), but there exists some other oracle \(\mathcal{B} \neq \mathcal{A}\) which causes \(p\) to loop or crash.
Then, since \(\Downarrow\) (and by extension, \(\NdetDownarrowA\)) requires \(p\) to terminate successfully for every possible nondeterministic behaviour, we cannot prove that \(p\) satisfies \emph{any} postcondition according to \(\NdetDownarrowA\), let alone \(Q\).

\par Now, we discuss the proof of \cref{equiv2}.
The leftward direction is easy.
\begin{lemma}\label{lemma1}
  For all \(p, m, \ell, Q, \mathcal{A}\),
  \[(p, m, \ell, [], []) \Downarrow Q \implies (p, m, \ell, [], []) \DownarrowA Q.\]
\end{lemma}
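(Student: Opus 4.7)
The plan is to generalize the statement to arbitrary initial traces and then proceed by induction on the derivation of $\Downarrow$. The intuition is straightforward: $\DownarrowA$ is a restriction of $\Downarrow$ that permits fewer executions (only those compatible with $\mathcal{A}$), so a postcondition that holds for all executions in particular holds for the oracle-constrained ones.

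First, I would strengthen the statement to the more general form
\[(p, m, \ell, t, k) \Downarrow Q \implies (p, m, \ell, t, k) \DownarrowA Q \quad \text{for all } p, m, \ell, t, k, Q, \mathcal{A},\]
since the induction over the big-step derivation will encounter subderivations with nonempty traces (e.g.\ the continuation of a sequence or stack-allocation body). The original lemma then follows by instantiating $t = k = [\,]$.

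Next, I would fix an arbitrary oracle $\mathcal{A}$ and induct on the $\Downarrow$ derivation. For every rule other than \texttt{stackalloc}, the $\Downarrow$ and $\DownarrowA$ rules are definitionally identical on premises, so each case is discharged by applying the corresponding $\DownarrowA$ rule and invoking the induction hypothesis on the subderivations. The only interesting case is \texttt{stackalloc}: the $\Downarrow$ premise quantifies universally over the fresh address $a$, while \texttt{stackalloc}$_\mathcal{A}$ requires the premise only for the single address $a := \mathcal{A}(k)$. I would simply specialize the universal hypothesis to this $a$ and then apply the induction hypothesis to the resulting subderivation (which uses the extended trace $k :: \branch a$, hence the need for the generalization above).

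I do not expect a real obstacle here; the lemma essentially says that a stronger rule implies a weaker one at the \texttt{stackalloc} step, and $\DownarrowA$ and $\Downarrow$ agree everywhere else. The only mild care needed is to ensure the induction hypothesis is stated with trace parameters quantified (so that it applies after $k$ has grown by a $\branch$ or $\leaknoarg$ event) and that the same oracle $\mathcal{A}$ threads through the whole proof, which is automatic since $\mathcal{A}$ is fixed at the outset.
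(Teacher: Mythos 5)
Your proof is correct and matches the paper's argument: the paper's proof is just the one-sentence observation that \(\Downarrow\) demands \(Q\) for all stack-allocation addresses while \(\DownarrowA\) demands it only for the address \(\mathcal{A}(k)\), which is exactly the specialization step you carry out in the \texttt{stackalloc} case of your induction. Your added details (generalizing over the initial traces and fixing \(\mathcal{A}\) up front) are the natural way to formalize that same idea.
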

\begin{proof}
  The \(\Downarrow\) requires \(Q\) to be achieved regardless of stack-allocation addresses, whereas \(\Downarrow_\mathcal{A}\) only requires it for particular stack-allocation addresses returned by \(\mathcal{A}\).
\end{proof}
Note that, in the following lemma, we pretend that a postcondition only takes the leakage trace \(k\) as input,
a helpful simplification that we will make in the rest of this section.
The proof is completely analogous if we permit the postcondition to take the full final state \((m, \ell, t, k)\) as input.
\begin{lemma}\label{lemma2}
  For all \(p, m, \ell, Q, \mathcal{A}\),
  \[(p, m, \ell, [], []) \DownarrowA Q \implies (p, m, \ell, [], []) \DownarrowA \{k : k \sim \mathcal{A} \land Q(k)\}.\]
\end{lemma}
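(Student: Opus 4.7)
The plan is to prove the lemma by strengthening it to a more general statement that admits a clean induction over the derivation of $\DownarrowA$. The natural generalization is: for every starting trace $k_0$ with $k_0 \sim \mathcal{A}$, if $(p, m, \ell, t, k_0) \DownarrowA Q$, then $(p, m, \ell, t, k_0) \DownarrowA \{k : k \sim \mathcal{A} \land Q(k)\}$. The lemma as stated is then the special case $k_0 = []$, which is vacuously compatible with every $\mathcal{A}$. Generalizing in this way is essential, because propagating the invariant $k \sim \mathcal{A}$ across sequencing requires an inductive hypothesis that allows arbitrary compatible starting traces, not merely the empty one.

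The proof proceeds by induction on the $\DownarrowA$ derivation, case-analyzing each big-step rule. For rules that extend the trace only with $\leak$-events (stores, loads, input, branching on conditionals, etc.), the compatibility predicate $k \sim \mathcal{A}$ constrains only the $\branchnoarg$ entries in $k$, so appending a $\leak$ event preserves it. The inductive hypothesis therefore yields the strengthened postcondition immediately. For \textsc{eval-seq}, one applies the inductive hypothesis to the first premise, obtaining a postcondition that guarantees the intermediate trace is compatible with $\mathcal{A}$; this supplies the hypothesis needed to invoke the inductive hypothesis on the second premise.

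The critical case is \texttt{stackalloc}$_\mathcal{A}$. Here the rule is designed precisely so that the newly appended event is $\branch a$ with $a = \mathcal{A}(k)$, where $k$ is the trace entering the rule. Since the incoming $k$ satisfies $k \sim \mathcal{A}$ by assumption, and the only new $\branchnoarg$ entry agrees with $\mathcal{A}$ applied to its immediate prefix, the extended trace $k :: \branch a$ satisfies $k :: \branch a \sim \mathcal{A}$. Applying the inductive hypothesis to the body of the \texttt{stackalloc} then yields the strengthened postcondition for the whole command.

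The main obstacle will not be any single rule, but rather setting up the generalization so that it threads correctly through compound rules (sequencing, conditionals, function calls, loops) and combines with the standard consequence/monotonicity property of $\DownarrowA$ that lets one intersect the postcondition with an invariant that holds of all reachable final traces. Once this scaffolding is in place, each per-rule step is a short structural argument: a $\leak$-event case is trivial, the \texttt{stackalloc}$_\mathcal{A}$ case relies on the explicit equation $a = \mathcal{A}(k)$ built into the rule, and all compound cases reduce to repeated use of the inductive hypothesis.
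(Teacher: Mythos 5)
Your proof is correct and matches the paper's approach: the paper dispatches this lemma as "trivial, by induction, since $\DownarrowA$ requires $p$ to execute compatibly with $\mathcal{A}$," and your argument is exactly that induction spelled out, with the needed generalization to arbitrary compatible starting traces and the key observation that the \texttt{stackalloc}$_\mathcal{A}$ rule's equation $a = \mathcal{A}(k)$ preserves compatibility.
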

\begin{proof}
  Trivial, by induction (since \(\DownarrowA\) requires \(p\) to execute compatibly with \(\mathcal{A}\)).
\end{proof}

\begin{proof}[Proof of \cref{equiv2}(\(\impliedby\)).]
  Fix an \(\mathcal{A}\).
  By assumption,
 \[(p, m, \ell, [], []) \Downarrow \{k : k \sim \mathcal{A} \Rightarrow Q(\mathcal{A})(k)\}.\]
 By \cref{lemma1} then,
 \[(p, m, \ell, [], []) \DownarrowA \{k : k \sim \mathcal{A} \Rightarrow Q(\mathcal{A})(k)\}.\]
 By \cref{lemma2} then,
 \[(p, m, \ell, [], []) \DownarrowA \{k : k \sim \mathcal{A} \land (k \sim \mathcal{A} \Rightarrow Q(\mathcal{A})(k))\}.\]
  Finally, by weakening this last postcondition, we get \((p, m, \ell, [], []) \Downarrow_\mathcal{A} Q(\mathcal{A})\).
\end{proof}
So, the left implication of \cref{equiv2} is easy.
The right implication appears much harder.
A superficial way of capturing the difficulty is to ask, ``what should we do induction on?''
%Probably something, since we are working with inductively defined propositions.
The only options are our infinitely many hypotheses of the form \((p, m, \ell, [], []) \Downarrow_\mathcal{A} Q(\mathcal{A})\).
None of these alone gives us enough information to even conclude that \((p, m, \ell, [], []) \Downarrow \{k : \mathsf{true}\}\); we need to take all of them together.
\par So, our intuition that the right implication of \cref{equiv2} is true appears to have little to do with the inductive structure of the semantics judgments.
Let us examine the intuition more carefully, in order to find a proof strategy.
Here is an informal sketch of the right implication.
\begin{proof}[Proof of \cref{equiv2} (\(\implies\)) (informal)]
  Fix an \(\mathcal{A}\).
  We want to show that \((p, m, \ell, [], []) \Downarrow \{k : k \sim \mathcal{A} \Rightarrow Q(\mathcal{A})(k)\}\).
  It ought to suffice to show that for any possible execution \(E\) of \((p, m, \ell, [], [])\) (where we think of an execution just as a possibly infinite list of states, and the set of possible executions is defined by \(\Downarrow\)), we have that (1) \(E\) terminates without crashing, and (2) if \(E\) was an execution with the compiler's nondeterministic choices matching \(\mathcal{A}\), then the final leakage \(k\) of \(E\) satisfies \(Q(\mathcal{A})(k)\).
  \par To prove that a given \(E\) terminates without crashing, we just note that the compiler's nondeterministic choices in \(E\) must be described by some oracle \(\mathcal{A}_E\).
  Then \(E\) is also a possible execution of \((p, m, \ell, [], [])\) according to \(\Downarrow_{\mathcal{A}_E}\), and we appeal to our hypothesis that \((p, m, \ell, [], []) \Downarrow_{\mathcal{A}_E} Q(\mathcal{A}_E)\) to see that \(E\) terminates without crashing.
  To prove (2), we note that if stack-allocation nondeterminism was described by \(\mathcal{A}\) in the execution of \(E\), then \(E\) is a possible execution of \((p, m, \ell, [], [])\) according to \(\Downarrow_\mathcal{A}\), and we appeal to our hypothesis that \((p, m, \ell, [], []) \Downarrow_\mathcal{A} Q(\mathcal{A})\) to see that the final leakage \(k\) of \(E\) must satisfy \(Q(\mathcal{A})(k)\).
\end{proof}
The handwavy part of that proof is the part where we assume that the proposition \((p, m, \ell, [], []) \Downarrow Q\) is equivalent to a statement of the form ``for any possible execution \(E\) of \((p, m, \ell, [], [])\), \ldots.''
So, our proof strategy is just to formalize this intuition.
We will show that \((p, m, \ell, [], []) \Downarrow Q\) is equivalent to a statement of that form (and do similarly for \(\Downarrow_\mathcal{A}\)), and then the remaining proof work is as easy as the informal proof given above.
\par We formalize this idea in terms of a small-step operational semantics, which otherwise is not needed in the formalization style that we adopted from Bedrock2.
This small-step semantics works with configurations of the form $(p, m, \ell, t, k)$, with the twist that we allow a slightly expanded grammar of programs $p$, so that (for example) we can leave markers for the ends of regions with local stack allocation (implicit with \verb|stackalloc| in source programs).
Here are two example rules, including just the most important parts of states.

\infrule
    {(e, (k, m, \ell)) \Downarrow (v, k')}
    {(x := e, (k, m, \ell)) \to (\mathsf{skip}, (k', m, \ell[x \hookleftarrow v]))}

\infrule
    {[a, a+n) \cap \mathsf{dom}~m = \varnothing{} 
        \andalso \mathsf{dom}(m') = [a, a+n)}
    {(\mathsf{stackalloc} \; n \; \mathsf{as} \; x \; \mathsf{in} \; p, (k, m, \ell)) \to (p; \mathsf{end\_stackalloc}(n, a), (k + \mathsf{allocate}(a), m \uplus m', \ell[x \hookleftarrow a]))}

\begin{definition}
  An \emph{execution} is an infinite sequence of optional states, a partial function of type $\mathbb N \rightharpoonup \mathsf{program} \times \mathsf{state}$.  An execution is \emph{possible} if, between every natural number $n$ and its successor $n+1$, either the successive mappings are compatible with small-step relation $\to$ (the program is not finished), $n$ is mapped to a stuck state and $n+1$ is not in the function's domain (the program just got stuck), $n$ is mapped to a state of the form $(\mathsf{skip}, s)$ and $n+1$ is not in the function's domain (the program just terminated successfully), or neither $n$ nor $n+1$ is in the domain (the program terminated or got stuck earlier).
\end{definition}

When an execution gets stuck, it is not always the fault of the program.
Thus we define a set $\mathcal{G}$ of states that are stuck for benign reasons.
In Bedrock2, there are two such reasons: either we have asked for IO, but there is no possible input meeting the IO specification; or we have asked to do some stack allocation, but there is no address available (out of memory).

\begin{definition}
  A small-step configuration satisfies a postcondition, written $(p, s) \models Q$, when either $(p, s) \in \mathcal{G}$ (benign stuckness) or $p = \mathsf{skip}$ and $Q(s)$ (execution finished in a state satisfying the postcondition).
  An execution satisfies a postcondition, written $E \models Q$, when there exists $i \in \mathbb N$ such that $E(i) \models Q$.
\end{definition}

\begin{lemma}\label{big_small}
  The following are equivalent:
  \begin{itemize}
  \item $(p, s) \Downarrow Q$.
  \item For any possible execution $E$ where $E(0) = (p, s)$, it follows that $E \models Q$.
  \end{itemize}
\end{lemma}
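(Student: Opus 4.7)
The plan is to prove the equivalence as two separate implications between the inductively defined big-step judgment and the property that every possible small-step execution eventually reaches a state in the satisfaction relation $\models Q$.

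For the forward direction ($\Downarrow \Rightarrow$ all possible executions satisfy $Q$), I would proceed by induction on the derivation of $(p, s) \Downarrow Q$. Supporting this induction, the first order of business is to establish two standard meta-theoretic lemmas. A \emph{progress} lemma: if $(p, s) \Downarrow Q$, then either $p = \mathsf{skip}$ with $Q(s)$, or $(p, s) \in \mathcal{G}$, or there exists a small step $(p, s) \to (p', s')$. A \emph{preservation} lemma: if $(p, s) \Downarrow Q$ and $(p, s) \to (p', s')$, then $(p', s') \Downarrow Q$. Given these, an arbitrary possible execution $E$ starting at $(p, s)$ either becomes undefined at some index because $p$ became $\mathsf{skip}$ (and then $Q$ is witnessed by preservation propagating $\Downarrow Q$ down to the terminal state), or because the state landed in $\mathcal{G}$ (which directly witnesses $E \models Q$), or $E$ keeps stepping. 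The interesting case in the induction is eval-seq, where I would split a possible execution into the prefix that exhausts $p_1$ (applying the first IH to obtain $Q_1$ at some index) and the continuing suffix starting from the intermediate state, to which the $p_2$ IH supplied by the rule's premise applies.

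For the backward direction, the argument is more delicate, because $\Downarrow$ is an inductively defined predicate and we cannot in general extract a depth bound directly from the hypothesis about all executions. The strategy is to build the $\Downarrow$ derivation by recursion guided by the structure of $p$ together with the small-step behaviour, using the key observation that if every execution from $(p, s)$ satisfies $Q$, then for every successor $(p', s')$ of $(p, s)$ the same holds of $(p', s')$, because executions from $(p', s')$ are exactly suffixes of executions from $(p, s)$ that took the corresponding first step. Case-analysing on whether $(p, s)$ is $(\mathsf{skip}, s)$, in $\mathcal{G}$, or steppable, I would repackage the hypothesis into the premises of the matching $\Downarrow$ introduction rule and recur.

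The main obstacle is ensuring the backward recursion is well-founded, since branching can be infinite (e.g.\ eval-input quantifies over all input values $n$), so one cannot simply use execution depth as a measure. I expect the cleanest route is to introduce a step-indexed approximation $\Downarrow^n$ parameterized by a natural number bounding the length of executions from $(p, s)$ that one must examine, prove that $\Downarrow = \bigcup_n \Downarrow^n$, and prove the backward direction at level $n$ by induction on $n$. The \texttt{while} construct will force the heaviest lifting: the ``all executions satisfy $Q$'' hypothesis implicitly contains both a loop invariant and a well-founded variant, and the proof must constructively extract them to fire the inductive loop rule of $\Downarrow$. I also expect to need classical reasoning (excluded middle on whether a state steps or is stuck, plus a choice principle to pick successors consistently), following the pattern used for similar adequacy theorems in \cite{Omnisemantics}.
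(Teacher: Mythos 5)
Your plan for the hard (small-step-to-big-step) direction rests on a claim that fails for exactly the kind of semantics this lemma is about. You propose a natural-number-indexed approximation $\Downarrow^n$ ``bounding the length of executions from $(p,s)$ that one must examine'' and the identity $\Downarrow\;=\;\bigcup_n \Downarrow^n$. Because the language has \emph{unboundedly branching} rules (\textsc{eval-input} quantifies over all input values, \texttt{stackalloc} over all addresses), a configuration can have all of its executions terminating while their lengths are unbounded: take $x = \texttt{input()}$ followed by a loop that counts $x$ down to zero. This program satisfies $(p,s)\Downarrow Q$ for the evident $Q$, yet no single $n$ bounds the number of steps of all executions, so $(p,s)\Downarrow^n Q$ fails for every $n$ and the union misses it. K\"onig's lemma does not rescue this, since the branching is infinite; nor does switching the index to derivation depth, since well-founded derivations of infinitely branching rules need not have depth below $\omega$. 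You correctly diagnose that execution depth cannot serve as a measure, but the step-indexed remedy reintroduces the same measure globally, so the induction you intend to run on $n$ never gets off the ground (the \texttt{while} case, where you hope to extract an invariant and a variant, is exactly where this breaks).

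The paper's proof avoids any numeric measure: it defines $\sqsubset$ as the transitive closure of the reversed small-step relation together with the relation pairing $p_1$ with $p_1;p_2$, proves the nontrivial lemma that $\sqsubset$ is well-founded \emph{when restricted to configurations all of whose possible executions satisfy some postcondition} (this is where the ``all executions terminate'' hypothesis is converted into a well-foundedness fact, without ever producing a bound), and then proves the lemma by well-founded induction on configurations with respect to $\sqsubset$, so every use of the induction hypothesis is either at a further-progressed configuration or at the prefix of a sequence. Your forward direction is essentially the paper's (induction on the $\Downarrow$ derivation, splitting executions at the seam in the \textsc{eval-seq} case), but note that progress plus preservation alone cannot discharge your third case ``$E$ keeps stepping'': those two lemmas are consistent with divergence, and ruling it out is precisely the termination content you only get by staying inside the induction on the derivation rather than appealing to preservation along an arbitrary execution.
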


This lemma corresponds to the point in the informal proof of \cref{equiv2} where we wrote that \(p \Downarrow Q\) holds iff every possible execution beginning with \(p\) terminates without crashing, and the terminated state will satisfy the postcondition.
\par Now, to prove that a big-step judgment implies a small-step judgment, we just do a straightforward induction on the $\Downarrow$ judgment.
(This strategy is business as usual for big-step semantics, though we retain omnisemantics's peculiarity of supporting nondeterminism by ``running'' to exhaustive postconditions, not single outcomes.)
\par The small-to-big-step direction is harder, because it is not clear what to do induction on.
We need to define an unusual well-founded ordering on states.

\begin{definition}
  Define $\sqsubset$ as the transitive closure of the union of two basic relations:
  \begin{itemize}
  \item The small-step relation $\to$ with argument order reversed
  \item $\{(p1, (p1; p2)) \mid p1, p2 \in \mathsf{program}\}$
  \end{itemize}
\end{definition}
\begin{lemma}
  Relation $\sqsubset$ is well-founded when we restrict its arguments to just those configurations that have possible executions, each of which can be proved to satisfy at least one postcondition.
\end{lemma}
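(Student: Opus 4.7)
The plan is to exhibit a well-founded measure that strictly decreases under each basic step of $\sqsubset$, and hence under the transitive closure. Given a configuration $(p, s)$ whose possible executions all terminate (either at $\mathsf{skip}$ or in the benignly stuck set $\mathcal{G}$), the tree of descendants under the small-step relation is well-founded, and I assign it an ordinal rank $r(p, s)$ in the usual way: $r = 0$ at leaves, and $r(p,s) = \sup\{r(p',s') + 1 : (p,s) \to (p',s')\}$ at internal nodes. The rank may be transfinite, since nondeterministic steps such as input reads can branch with infinitely many children, but ordinals suffice for the argument.

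For $\sqsubset_1$, which is the reverse of $\to$, the rank strictly decreases by construction, since each child has strictly smaller rank than its parent. For $\sqsubset_2$, the syntactic relation $p_1 \sqsubset_2 (p_1; p_2)$, rank is only weakly monotone: I will show $r(p_1, s) \leq r(p_1; p_2, s)$ by transfinite induction on $r(p_1, s)$, using the observation that the map $(p_1', s') \mapsto (p_1'; p_2, s')$ embeds the descendant tree of $(p_1, s)$ into that of $(p_1; p_2, s)$, since each small step $(p_1, s') \to (p_1', s'')$ lifts through the sequencing rule to $(p_1; p_2, s') \to (p_1'; p_2, s'')$. Syntactic program size does, however, strictly decrease: $|p_1| < |p_1; p_2|$.

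I will therefore use the lexicographic measure $\mu(p, s) = (r(p, s), |p|)$ taking values in the well-founded product $\mathrm{Ord} \times \mathbb{N}$. A $\sqsubset_1$ step strictly decreases the first component; a $\sqsubset_2$ step leaves the first component weakly decreased and the second strictly decreased; so $\mu$ is lexicographically smaller in either case. Well-foundedness of the lexicographic product then gives well-foundedness of the union of $\sqsubset_1$ and $\sqsubset_2$, and hence of its transitive closure $\sqsubset$, on the restricted domain. I also need that the restricted domain is closed under $\sqsubset$-descent, i.e. that every $\to$-successor of $(p, s)$, and every sub-program $p_1$ of a sequenced configuration $(p_1; p_2, s)$, again has every possible execution terminating; both are immediate, since any infinite execution of a descendant can be lifted to an infinite execution of the parent.

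The main obstacle will be the handling of $\sqsubset_2$: keeping the rank-embedding argument clean in the presence of benign stuckness (where $(p_1, s)$ and $(p_1; p_2, s)$ may both be stuck with rank zero) and of the special reduction $(\mathsf{skip}; p_2, s) \to (p_2, s)$, which is precisely why the rank decrease under $\sqsubset_2$ is only weak and the syntactic-size tiebreak is essential.
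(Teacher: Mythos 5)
Your measure argument is sound, and in fact the paper states this lemma without an on-paper proof (it is discharged only in the Coq development), so there is no written argument to contrast with; the rank-plus-size lexicographic measure you give is a perfectly good way to establish it. The key ingredients are all present: on the restricted domain every possible execution terminates, so the $\to$-descendant tree of each configuration is well-founded and has an ordinal rank even with infinitely branching steps such as input; reversed $\to$ strictly decreases rank; the sequencing congruence rule embeds the descendant tree of $(p_1,s)$ into that of $(p_1;p_2,s)$, giving weak rank monotonicity, with syntactic size as the strict tiebreak (and you correctly identify benign stuckness and the $(\mathsf{skip};p_2,s)\to(p_2,s)$ step as the reasons the rank inequality is only weak).

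Two small points deserve attention. First, your embedding argument silently reads the second basic relation as pairing $(p_1,s)$ with $(p_1;p_2,s)$ at the \emph{same} state; the paper's informal definition only constrains the programs. The same-state reading is the one actually needed for the well-founded induction in the proof of the big-step/small-step equivalence (the induction hypothesis is invoked either at a further-progressed configuration or at the prefix $p_1$ with the current state), so this is the right interpretation, but it is worth saying explicitly, since with an arbitrary state on the prefix side your rank comparison would not go through. Second, because $\sqsubset$ is a transitive closure restricted afterwards to the good domain, a single $\sqsubset$-link between good configurations may pass through intermediate configurations; your measure must be defined there too. This is exactly what your closure observation buys—``no infinite execution from this configuration'' is preserved by both basic descents via the lifting of executions—and that weaker property is all the rank needs, even though it does not by itself show that the intermediate configurations satisfy the full ``every execution satisfies a postcondition'' condition (for that one would also note that a badly stuck execution of $(p_1,s)$ lifts to a badly stuck execution of $(p_1;p_2,s)$). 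With those clarifications the proof is complete.
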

Now we can prove \cref{big_small} by well-founded induction on configurations w.r.t. $\sqsubset$.
This induction is helpful, because whenever we need to apply an inductive hypothesis in the proof of \cref{big_small}, it will either be to some further progressed state or to a subterm that is a prefix, the two basic cases in the definition of $\sqsubset$.
\par Once we have proved \cref{big_small}, the remaining formal proof of \cref{equiv2} (\(\implies\)) straightforwardly follows the informal proof sketch given further above.

\section{Leakage Trees as an Alternative to Predictors}\label{leakage_trees}
We found the predictors of \cref{predictors} to be inconvenient for source-program proofs and automation.
The aesthetic annoyance is just that it is not nice to write down or think about a predictor corresponding to a source program.
\par Worse, for automation purposes, consider proving a theorem of the following form (note that all of our predictor-style specifications in this paper are basically of this form).
\begin{verbatim}
exists p: predictor, ...
\end{verbatim}
It is hard to get Coq to infer what \verb|p| should look like.
Functions are not inductively defined, so one must either enter the correct value of \verb|p| manually or write a moderately clever script.
\par To solve this problem, \cref{leakage_trees_def} will present \emph{leakage trees}, which are something like ``inductively defined predictors,'' with an inductively defined version of the ``predicts'' predicate.
As an added bonus, leakage trees seem conceptually closer to leakage traces than predictors are.
\par \cref{more_pred} observes that leakage trees are slightly less expressive than predictors.
\cref{big_trees} presents a slightly modified version of leakage trees---``big trees''---along with a theorem stating that big trees are equally as expressive as predictors.
Finally, \cref{pred_vs_tree} discusses why trees are nicer to work with than predictors.

\subsection{Definition of Leakage Trees}\label{leakage_trees_def}
Without nondeterminism, we say that a program \(p\) is constant-time if it executes with a leakage trace that is a function of public inputs.
With nondeterminism, we say that \(p\) is predictor-constant-time if it executes with a predictor that is a function of public inputs.
\par The similarity between these two statements suggests that, in the nondeterministic case, a predictor actually plays the same role that a leakage trace plays in the deterministic case.
Inspired by this similarity, we will define \emph{leakage trees}: data structures that represent sets of leakage traces, just like predictors do, but which have the conceptual advantage of looking much more like leakage traces than predictors do.
\par Consider the following inductive definition, written in Coq.
The \verb|word| type is the machine-word type that we have been working with throughout this paper.
\begin{verbatim}
Inductive leakage_tree :=
| TreeLeaf
| TreeLeak (thing_to_leak : word) (rest_of_leakage : leakage_tree)
| TreeBranch (rest_of_leakage_given_branch_direction : word -> leakage_tree).
\end{verbatim}
Like a predictor, a leakage tree represents a certain set of leakage traces.
In particular, it represents the set of leakage traces that can be formed by taking a nonbacktracking path through the leakage tree, from the root to a leaf.
We formalize that definition as follows.
\begin{definition}\label{path}
  A leakage trace \(k\) is \emph{a path in} a leakage tree \(\mathcal{T}\) if \(k \in \mathcal{T}\), where \(\in\) is defined inductively by the following rules.
  \begin{mathpar}
  \inferrule[nil\_path]
     {\;}
     {[] \in \treeleaf}

  \inferrule[leak\_path]
     {k \in \mathcal{T}}
     {[\leak x] \app k \in \treeleak{x}{\mathcal{T}}}

  \inferrule[branch\_path]
     {k \in f(x)}
     {[\branch x] \app k \in \treebranch{f}}
  \end{mathpar}
\end{definition}
\begin{definition}
  Given a function \(\mathcal{T}\) taking IO traces to leakage trees, we say that \((p, m, l, [], [])\) \emph{executes with leakage tree} \(\mathcal{T}\) if \((p, m, l, [], []) \Downarrow \{(t, k) : k \in \mathcal{T}(t)\}\).
\end{definition}
The analogy between this definition and \cref{ex_with_pred} should be clear.
A leakage tree, like a predictor, is a natural way of answering the question ``what will the leakage trace of \((p, m, \ell, [], [])\) be?'' without having to assume the existence of an oracle $\mathcal{A}$.
We say that a program is \emph{tree-constant-time} if it executes with a leakage tree depending only on public values.

\subsection{There Are More Predictors than Trace Trees}\label{more_pred}
\par One might hope that a set of leakage traces can be specified by a predictor if and only if it can be specified by a leakage tree.
One direction is true.
\begin{theorem}\label{trees_are_predictors}
  For every leakage tree \(\mathcal{T}\), there exists a predictor \(\mathcal{P}\) such that for all leakage traces \(k\), we have \(k \in \mathcal{T} \iff k \in \mathcal{P}\).
\end{theorem}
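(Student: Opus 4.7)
The plan is to define the desired predictor $\mathcal{P}$ by walking $\mathcal{T}$ along its input trace. Introduce a helper $W(\mathcal{T}, k)$ that starts at $\mathcal{T}$ and consumes $k$ event by event: on $\leak x$, if the current node is $\treeleak{x}{\mathcal{T}'}$ advance to $\mathcal{T}'$, otherwise declare failure; on $\branch x$, if the current node is $\treebranch{f}$ advance to $f(x)$, otherwise fail. Then set $\mathcal{P}(k) := \leak y$ when $W(\mathcal{T}, k) = \treeleak{y}{\cdot}$, $\mathcal{P}(k) := \branchnoarg$ when $W(\mathcal{T}, k) = \treebranch{\cdot}$, and $\mathcal{P}(k) := \en$ in the remaining cases (when $W$ returns $\treeleaf$ or fails).

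For the forward direction, I would first prove a composition lemma: whenever $k_1 \app k_2 \in \mathcal{T}$ via the rules of \cref{path}, the walk $W(\mathcal{T}, k_1)$ succeeds and reaches a subtree $\mathcal{T}_1$ with $k_2 \in \mathcal{T}_1$. This follows by a straightforward induction on $k_1$ with case analysis on the head rule of the derivation. Applying the lemma to the three obligations for $k \in \mathcal{P}$: splitting $k = k_1 \app [\leak x] \app k_2$ forces $W(\mathcal{T}, k_1)$ to have root $\treeleak{x}{\cdot}$, so $\mathcal{P}(k_1) = \leak x$; splitting at $\branch x$ gives root $\treebranch{\cdot}$, so $\mathcal{P}(k_1) = \branchnoarg$; and for $k$ itself the walk reaches $\treeleaf$, so $\mathcal{P}(k) = \en$.

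For the backward direction, I would first show by induction on the length of a prefix $k_1$ of $k$ that $W(\mathcal{T}, k_1)$ never fails, assuming $k \in \mathcal{P}$. The inductive step uses the prefix conditions on $\mathcal{P}$: if the next event of $k$ after $k_1$ is $\leak x$, then $\mathcal{P}(k_1) = \leak x$, which by the definition of $\mathcal{P}$ forces the subtree at $k_1$ to be $\treeleak{x}{\cdot}$, so the next step of the walk succeeds; similarly in the $\branch x$ case, we reach a $\treebranch{f}$ and take $f(x)$. Once the walk of $k$ is known to succeed, the end condition $\mathcal{P}(k) = \en$ forces $W(\mathcal{T}, k) = \treeleaf$ (since any other successful outcome would give $\mathcal{P}(k) \in \{\leak y, \branchnoarg\}$). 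From this, $k \in \mathcal{T}$ follows by induction on $k$, reinvoking the corresponding rule of \cref{path} at each step.

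The main obstacle is ensuring the default value of $\mathcal{P}$ on failed walks cannot accidentally satisfy a prefix condition and thereby derail the backward-direction argument. Choosing $\en$ as the default works cleanly, because no prefix condition on a strict prefix of $k$ ever demands $\mathcal{P}(k_1) = \en$; those conditions require either $\leak x$ or $\branchnoarg$. So if any prefix walk failed, $\mathcal{P}(k_1)$ would be $\en$ while the prefix condition demanded something else, contradicting $k \in \mathcal{P}$. Once that observation is in place, the remainder is routine structural induction.
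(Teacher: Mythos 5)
Your proposal is correct and matches the paper's proof in essence: both define $\mathcal{P}$ by walking $\mathcal{T}$ along the given trace and returning $\leak x$, $\branchnoarg$, or $\en$ according to the node reached, with $\en$ as the harmless default when the walk fails. The only cosmetic difference is that your walk checks the leaked value against the tree before advancing while the paper's Coq fixpoint advances on any $\leaknoarg$ event, and you additionally spell out the two correctness inductions that the paper leaves implicit.
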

\begin{proof}
  Suppose the tree \(\mathcal{T}\) is given.
  Given a leakage trace \(k\), to determine what \(\mathcal{P}(k)\) should be, all we have to do is walk along the tree \(\mathcal{T}\), starting from the root and then following the path defined by \(k\), until we exhaust \(k\).
  Then, wherever we end up tells us what event should come after \(k\): if we end at a leaf, we return \(\en\); if we end at a branch, we return \(\branchnoarg\); if we end at \(\treeleak x\), then we return \(\leak x\).
  \par In Coq, we can define \(\mathcal{P}\) as follows.
\begin{verbatim}
Fixpoint P (T : leakage_tree) (k : leakage) :=
  match T, k with
  | TreeLeaf, nil => End
  | TreeLeak x T', nil => Leak x
  | TreeBranch T', nil => CompNonDet
  | TreeLeak _ T', Leak _ :: k' => P T' k'
  | TreeBranch T', CompNonDet b :: k' => P (T' b) k'
  | _, _ => End (*in this case, k is not a prefix of a path in T,
                  so we can return anything*) 
  end.
\end{verbatim}  
\end{proof}
As a corollary, we conclude that tree-constant-time implies predictor-constant-time.
\par The converse to \cref{trees_are_predictors} is not true.
Predictors are able to represent some sets that leakage trees cannot represent.
\begin{example}\label{not_a_tree}
  Let \(\mathcal{P}\) be a predictor which, given a leakage trace \(k\), returns \(\en\) if the last element of \(k\) is \(\branch 0\) and returns \(\branchnoarg\) otherwise.
  There does not exist a leakage tree \(\mathcal{T}\) such that \(\forall k. \; k \in \mathcal{P} \iff k \in \mathcal{T}\).
\end{example}
\begin{proof}
  Assume we had such a \(\mathcal{T}\).
  Clearly \(\mathcal{T}\) is of the form \(\treebranch{f}\), where \(f(0) = \treeleaf\), and \(f(1)\) is such that \(\forall k. \; k \in \mathcal{P} \iff k \in f(1)\).
  \par Now let \(\mathcal{T}\) be minimal (under the subterm relation that Coq works with) with the property that \(\forall k. \; k \in \mathcal{P} \iff k \in \mathcal{T}\).
  But then \(\mathcal{T}\) is of the form \(\treebranch{f}\), where \(f(1)\) also has this property,
  contradicting the minimality of \(\mathcal{T}\).
\end{proof}
The issue demonstrated by this example is the only reason that some predictors do not have corresponding leakage trees.
Slightly more precisely, the only limitation is that every path in a leakage tree is finite.
Thus, as we show in the next section, there is a correspondence between predictors and potentially infinite rooted trees, where a trace is predicted by the predictor if and only if it is a finite root-to-leaf path in the tree.

\subsection{Big Trees Are Equivalent to Predictors}\label{big_trees}
Define a \emph{big tree} as follows.
\begin{verbatim}
CoInductive big_tree :=
| BigTreeLeaf
| BigTreeLeak (thing_to_leak : word) (rest_of_leakage : big_tree)
| BigTreeBranch (rest_of_leakage_given_branch_direction : word -> big_tree).
\end{verbatim}
Note that the definition of \verb|big_tree| is exactly the same as the definition of \verb|leakage_tree|, the only difference being that we are taking the coinductive interpretation of the constructors---that is, the word \verb|Inductive| has been replaced by \verb|CoInductive|.
\begin{definition}\label{big_path}
  A leakage trace \(k\) is \emph{a path in} a big tree \(\mathcal{B}\) if \(k \in \mathcal{B}\), where \(\in\) is defined inductively by the following rules.
  \begin{mathpar}
  \inferrule[nil\_path]
     {\;}
     {[] \in \bigtreeleaf}

  \inferrule[leak\_path]
     {k \in \mathcal{B}}
     {[\leak x] \app k \in \bigtreeleak{x}{\mathcal{B}}}

  \inferrule[branch\_path]
     {k \in f(x)}
     {[\branch x] \app k \in \bigtreebranch{f}}
  \end{mathpar}
\end{definition}
Note that \cref{big_path} is exactly the same as \cref{path}, the only difference being that trees have been replaced with big trees.
We can then define \emph{big-tree constant time} analogously to tree constant time.

\begin{theorem}\label{big_trees_exactly_predictors}
  For every big tree \(\mathcal{B}\), there exists a predictor \(\mathcal{P}\) such that for all leakage traces \(k\), we have \(k \in \mathcal{B} \iff k \in \mathcal{P}\).
  Conversely: for every predictor \(\mathcal{P}\), there exists a big tree \(\mathcal{B}\) such that for all \(k\), we have \(k \in \mathcal{B} \iff k \in \mathcal{P}\).
\end{theorem}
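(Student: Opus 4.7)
The plan is to construct the two directions separately, mirroring the style of \cref{trees_are_predictors} but exploiting coinduction for the reverse direction.

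For the first direction (big tree to predictor), I would essentially reuse the construction from the proof of \cref{trees_are_predictors}, but structured so that recursion happens on the finite leakage trace $k$ rather than on the big tree $\mathcal{B}$ (which need not be well-founded). Concretely, define $\mathcal{P}(k)$ by walking $\mathcal{B}$ one step per element of $k$: if $k = []$, read off $\en$, $\leak x$, or $\branchnoarg$ from the root constructor; otherwise, pattern-match on both the head of $k$ and the root of $\mathcal{B}$, descending into the appropriate subtree if they agree and returning any default otherwise. Since $k$ is finite, this fixpoint is accepted. The equivalence $k \in \mathcal{B} \iff k \in \mathcal{P}$ then follows by straightforward induction on $k$, using the inductive rules of \cref{big_path} and unfolding the definition of $\mathcal{P}$.

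For the reverse direction (predictor to big tree), I would define a coinductive function $\mathcal{B}_\mathcal{P}$ that takes a trace prefix $k_0$ and returns the ``continuation big tree'' from that point: $\mathcal{B}_\mathcal{P}(k_0) = \bigtreeleaf$ if $\mathcal{P}(k_0) = \en$; $\mathcal{B}_\mathcal{P}(k_0) = \bigtreeleak{x}{\mathcal{B}_\mathcal{P}(k_0 \app [\leak x])}$ if $\mathcal{P}(k_0) = \leak x$; and $\mathcal{B}_\mathcal{P}(k_0) = \bigtreebranch(\lambda x.\,\mathcal{B}_\mathcal{P}(k_0 \app [\branch x]))$ if $\mathcal{P}(k_0) = \branchnoarg$. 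Each recursive call is guarded by a constructor, so the cofixpoint is productive. Take $\mathcal{B} := \mathcal{B}_\mathcal{P}([])$. The equivalence then follows from the strengthened statement: for all $k_0$ and $k$, we have $k \in \mathcal{B}_\mathcal{P}(k_0) \iff \mathcal{P}(k_0 \app k') = e$ for every decomposition $k = k' \app [e] \app k''$ with $e \in \{\leak x, \branch x\}$, together with $\mathcal{P}(k_0 \app k) = \en$. This biconditional is proved by induction on $k$ (using the inductive definition of $\in$ for big trees, which recurses on $k$), and the case $k_0 = []$ gives the theorem.

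The main obstacle will be the reverse direction. The coinductive definition of $\mathcal{B}_\mathcal{P}$ is delicate: one must ensure that every branch of the case analysis on $\mathcal{P}(k_0)$ is syntactically guarded by a constructor so that Coq's guardedness checker accepts it, and the strengthening to arbitrary prefixes $k_0$ is essential because a direct induction with $k_0$ fixed at $[]$ does not give a usable inductive hypothesis when descending into $\bigtreeleak$ or $\bigtreebranch$ subtrees. Once the generalized statement is set up, each direction of the biconditional reduces to case analysis on $\mathcal{P}(k_0)$ and a single appeal to the inductive hypothesis.
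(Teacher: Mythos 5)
Your proposal is correct and matches what the paper intends: the paper's proof is just the remark ``straightforward; similar to \cref{trees_are_predictors},'' and your two constructions (the predictor obtained by walking the big tree with recursion on the finite trace $k$, and the guarded cofixpoint $\mathcal{B}_\mathcal{P}$ indexed by a prefix $k_0$, proved correct by induction on $k$ with $k_0$ generalized) are exactly the natural way to carry that out. One cosmetic point: in your strengthened invariant the predictor should return $\branchnoarg$ (argument-free) when the decomposition exposes a $\branch x$ event, not $\branch x$ itself, but this does not affect the argument.
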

\begin{proof}
  Straightforward; similar to the proof of \cref{trees_are_predictors}.
\end{proof}
By the theorem, big-tree constant time is the same thing as predictor constant time.
Further, everything that is done with predictors in this paper could just as well have been done with big trees instead.
\par For an exploration of data structures similar to leakage trees and big trees, see \cite{itrees}.

\subsection{Practical Considerations: Predictors vs. Leakage Trees}\label{pred_vs_tree}
We have experimented with writing source-program and compiler proofs using leakage trees.
We did not feel the need to experiment with big trees, since---despite the fact that leakage trees are less expressive than big trees---we did not encounter any example programs that are big-tree constant time but not tree constant time.
Indeed, although \cref{not_a_tree} shows that big-tree constant time and tree constant time are not equivalent in general, we suspect (but have not proved) that they are equivalent for terminating programs.
\par Since we have not experimented with big trees, the observations discussed in this subsection will be just about leakage trees rather than big trees.
However, we expect the same considerations to apply to big trees.
\par Leakage trees seem nicer to work with and think about than predictors are.
The inductive structure of leakage trees, and the inductive structure of the \(k \in \mathcal{T}\) relation, make them more suitable for proof automation in Coq than predictors are.
In our experience of writing program proofs, showing that there exists a leakage tree with which a small program executes usually amounts to symbolically executing the program to see what its leakage trace \(k\) is, obtaining a goal of the form \(k \in\,\, ?\mathcal{T}\), and then writing \verb|repeat econstructor| to let Coq fill in the value of \(\mathcal{T}\).
\par A compiler specification using big trees or leakage trees can be written just like a compiler specification using predictors; to say what it means for a compiler to admit tree-transformation functions or big-tree-transformation functions, just modify \cref{ctp3} by replacing the predictor \(\mathcal{P}\) with a leakage tree \(\mathcal{T}\) or a big tree \(\mathcal{B}\).
We have experimented with proving that a compiler admits tree-transformation functions.
It is similar to proving that a compiler admits predictor-transformation functions.

\subsection{A Predictor and Leakage Tree for \texttt{stack\_swap}}\label{ss_pred_tree}
Recall that in \cref{pppct} we declined to write down a predictor for \verb|stack_swap|, saying that it would be awkward and unenlightening.
Nonetheless, here it is.
\begin{verbatim}
fun k =>
match k with
| [] => CompNonDet
| [CompNonDet x] => Leak x
| [CompNonDet x; Leak _] => Leak (x + 1)
| [CompNonDet x; Leak _; Leak _] => Leak x
| [CompNonDet x; Leak _; Leak _; Leak _] => Leak (x + 1)
| _ => end
end
\end{verbatim}
For comparison, here is a leakage tree for \verb|stack_swap|.
\begin{verbatim}
TreeBranch (fun x => TreeLeak x (TreeLeak (x + 1)
  (TreeLeak x (TreeLeak (x + 1) TreeLeaf))))
\end{verbatim}

\section{How to Make Predictor-Based Specifications Modular for Program Verification}\label{sec:nonmodular}
Suppose we have a function \verb|f2| that calls a function \verb|f1|, and we have proved a specification of \verb|f1| saying that it executes with a predictor that depends only on public values.
We would like to use this specification of \verb|f1| to help prove that \verb|f2| executes with a predictor depending only on public values.
But as we will show, the specification of \verb|f1| may not be helpful for proving the specification of \verb|f2|; in this sense, program proofs using predictors appear to be unfortunately nonmodular.
\par For example, consider these functions, where \verb|a| is some global constant.
\begin{verbatim}
  f1 () { stackalloc 4 as x; *a = x }
  f2 () { f1(); x = *a; if (x) {} }
\end{verbatim}
Suppose we have proved that \verb|f1| executes with predictor \(\mathcal{P}_1\).
Now we would like to use this fact to prove that \verb|f2| executes with some fixed predictor.
But clearly the fact that \verb|f1| executes with predictor \(\mathcal{P}_1\) is not a sufficient specification.
Instead, we need a specification saying ``\verb|f1| executes with predictor \(\mathcal{P}_1\), \emph{and} the value that \verb|f1| stores at address \verb|a| does not depend on private values.''
\par But how do we formalize that ``the value that \verb|f1| wrote at address \verb|a| does not depend on private values''?
It is not immediately obvious how to do so without resorting to one of the predicates \(\DownarrowA\) or \(\NdetDownarrowA\).
The trick is to take advantage of the fact that \verb|f1| is constant-time, and thus there should be nothing private in its leakage trace.
So, intuitively, to specify that the value written at address \verb|a| depends only on public values, it should suffice to say that the value is a function of the leakage trace of \verb|f1|.
Formally, we write the specification of \verb|f1| as follows (where \(a\) is the value stored in the global constant \verb|a|).
\[\exists \mathcal{P}_1, f. \; \forall m, \ell. \; (\verb|f1()|, m, \ell, [], []) \Downarrow \{(m', \ell, [], k) : k \in \mathcal{P}_1 \land m' @ a = f(k)\}.\]
To construct \(f\) satisfying the specification, we can note that the stack-allocation address appears in the leakage trace \(k\), and hence the function \(f\) can extract the stack-allocation address from \(k\).
\par Now, we demonstrate that this specification of \verb|f1| can be used to prove the following specification of \verb|f2|.
\[\exists \mathcal{P}. \; \forall m, \ell. \; (\verb|f2()|, m, \ell, [], []) \Downarrow \{(m', \ell, [], k) : k \in \mathcal{P}\}.\]
To do so, we use the following general lemma.
\begin{lemma}
  Let \(\mathcal{P}_1\) be a predictor, and let \(\mathcal{P}_2\) be a function taking leakage traces to predictors.
  Then, there exists a predictor \(\mathcal{P}\), which we denote by \(\mathcal{P}_1 \app \mathcal{P}_2\), such that
  \[\forall k_1, k_2. \;\;\; k_1 \in \mathcal{P}_1 \land k_2 \in \mathcal{P}_2(k_1) \implies (k_1 \app k_2) \in \mathcal{P}.\]
\end{lemma}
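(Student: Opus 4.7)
The plan is to define $\mathcal{P} := \mathcal{P}_1 \app \mathcal{P}_2$ by: on input $k$, search for the shortest prefix $k_a$ of $k$ such that $\mathcal{P}_1(k_a) = \en$; if such $k_a$ exists, write $k = k_a \app k_b$ and return $\mathcal{P}_2(k_a)(k_b)$; otherwise, return $\mathcal{P}_1(k)$. Intuitively, $\mathcal{P}$ imitates $\mathcal{P}_1$ until $\mathcal{P}_1$ signals termination via $\en$, at which point it hands control over to the appropriate continuation supplied by $\mathcal{P}_2$.

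The proof then unfolds by checking the three clauses of the ``predicts'' relation for $\mathcal{P}$ on the concatenated trace $k_1 \app k_2$. The key preliminary observation is that the hypothesis $k_1 \in \mathcal{P}_1$ forces $\mathcal{P}_1(k') \neq \en$ at every proper prefix $k'$ of $k_1$ (each such prefix must satisfy $\mathcal{P}_1(k') = \leak x$ or $\branchnoarg$ to match the next event of $k_1$), while $\mathcal{P}_1(k_1) = \en$ holds by the last clause of $k_1 \in \mathcal{P}_1$. Therefore, for any prefix $k_a$ of $k_1 \app k_2$, the shortest $\mathcal{P}_1$-completing prefix is $k_1$ when $k_a$ extends as far as $k_1$, and does not exist when $k_a$ is strictly inside $k_1$.

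With this in hand, the verification splits into two cases. If $k_a$ is a proper prefix of $k_1$, the ``otherwise'' branch of the definition fires and $\mathcal{P}(k_a) = \mathcal{P}_1(k_a)$, which produces the correct next event by $k_1 \in \mathcal{P}_1$. If $k_a = k_1 \app k_a'$ with $k_a'$ a prefix of $k_2$, then the search locates $k_1$ and $\mathcal{P}(k_a) = \mathcal{P}_2(k_1)(k_a')$, which produces the correct next event by $k_2 \in \mathcal{P}_2(k_1)$. Specializing to $k_a' = k_2$ yields $\mathcal{P}(k_1 \app k_2) = \mathcal{P}_2(k_1)(k_2) = \en$, discharging the terminal clause.

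The main obstacle is choosing a definition that tolerates $\mathcal{P}_1$'s unconstrained behavior on traces that $\mathcal{P}_1$ does not itself predict. A more naive recursive approach, walking through $k$ event-by-event and consuming symbols via $\mathcal{P}_1$ until an $\en$ is observed, is tempting but brittle: $\mathcal{P}_1$ may return arbitrary values on off-path inputs and thereby trigger an inappropriate switch to $\mathcal{P}_2$. The shortest-prefix formulation asks a single, well-defined question about $k$ directly, and on the particular prefixes of $k_1 \app k_2$ that matter, the observation isolating $k_1$ as the unique $\mathcal{P}_1$-completion point makes the case analysis line up cleanly.
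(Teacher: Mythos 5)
Your construction and verification are correct. The key observation---that $k_1 \in \mathcal{P}_1$ forces $\mathcal{P}_1$ to return $\leak x$ or $\branchnoarg$ (hence not $\en$) on every proper prefix of $k_1$, while $\mathcal{P}_1(k_1) = \en$, so that $k_1$ is the unique hand-off point among prefixes of $k_1 \app k_2$---is exactly what makes the three clauses of $(k_1 \app k_2) \in \mathcal{P}$ go through, and your case split (prefix strictly inside $k_1$ versus prefix of the form $k_1 \app k_a'$) covers all decompositions, including the boundary cases $k_a = k_1$ and $k_a' = k_2$. Note that the paper itself gives no written proof of this lemma (it explicitly defers to the Coq formalization), so there is no in-text argument to compare against; your shortest-$\en$-prefix definition of $\mathcal{P}_1 \app \mathcal{P}_2$ is the natural construction one would expect that formalization to use, and your remark about off-path behavior of $\mathcal{P}_1$ is harmless either way, since the lemma constrains $\mathcal{P}$ only on prefixes of genuinely predicted concatenations.
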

We will not write a proof of the lemma here, but we note that it is not hard.
And, like all of our other lemmas and theorems, it is formalized in Coq.
\par Now, we observe that the leakage of \verb|f2| is of the form \(k_1 \app k_2\), where \(k_1\) is the leakage of \verb|f1|, and \(k_2\) is the leakage of everything coming after the call to \verb|f1|.
Further, the specification of \verb|f1| equips us with \(\mathcal{P}_1\) such that \(k_1 \in \mathcal{P}_1\) and \(f\) such that \(f(k_1)\) is the value stored at \verb|a|.
Also, it is clear how to define a function \(\mathcal{P}_2\) which takes as input the value stored at \verb|a| and returns a predictor for \(k_2\).
So, by the lemma, we obtain a predictor \(\mathcal{P}_1 \app (\mathcal{P}_2 \circ f)\) for the whole leakage trace of \verb|f2|.
\par The method presented here generalizes to the case where \verb|f1| outputs any set of values that are then observed by \verb|f2|.
The general principle is that, if we have a predictor for the leakage \(k_1\) of \verb|f1|, and we have a function taking \(k_1\) to a predictor \(\mathcal{P}_2\) for the leakage \(k_2\) that comes after \verb|f1|, then we can concatenate them to get a predictor for \verb|f2|.
\par The concern considered in this section (and the corresponding solution) applies to leakage-tree-based specifications just as it applies to predictor-based specifications.

\section{Predictor-Based Compiler-Correctness Theorems for Reordering Optimizations}\label{app:compile-predictor}

\subsection{A Reasonable Compiler Optimization that Violates \cref{ctp5}}\label{sec:optimization}
In this section, we present a simple example of a compiler reordering optimization that is intuitively constant time-preserving but does not actually admit oracle-transformation functions.
Because the correctness of reordering memory allocations gets subtle very quickly (see, e.g., address guessing \cite{twinsem}), it would be a distraction to work with our running example of \verb|stackalloc| here.
Instead, we introduce a new language construct, \verb|random as x|, that picks a random number and binds it to the local variable \verb|x|.
The random number is picked in the same way as \verb|stackalloc| addresses are: by applying the oracle \(\mathcal{A}\) to the previous leakage trace.
\par Note that the examples and proofs involving the \verb|random| construct are about the only part of this paper that is purely on paper with no Coq proofs.  Bedrock2 does not have the \verb|random| construct.
\par Define the programs \(p\) and \(p'\) as follows.
\begin{alltt}
\(p\,\,:=\) \{ random as x; z = *w; print(x) \}
\(p':=\) \{ z = *w; random as x; print(x) \}
\end{alltt}
Arguably, a constant time-preserving compiler should be permitted to transform \(p\) to \(p'\).
Yet, consider the following fact.
\begin{example}
  Let \(C\) be a compiler pass with \(C(p) = p'\).
  Then \(C\) cannot satisfy \cref{ctp5}.
\end{example}
\begin{proof}
  Suppose that \(C\) did satisfy \cref{ctp5}.
  Then let \(\mathcal{A}_p\) be as in \cref{ctp5}.
  Let \(a_1, a_2\) be two distinct memory addresses.
  Let \(\mathcal{A}\) be a low-level oracle with \(\mathcal{A}([\leak a_1]) \neq \mathcal{A}([\leak a_2])\).
  Let \(\mathcal{B} := \mathcal{A}_p(\mathcal{A})\).
  Let \(m\) be some memory state where reads from \(a_1, a_2\) are allowed.
  Let \(\ell\) be any local-variables state.
  \par For \(i = 1, 2\) it is clear from looking at \(p\) that
  \[\forall \mathcal{C}. \; (p, m, \ell[\mathtt{w} := a_i], [], []) \DownarrowC \{([\verb|out | \mathcal{C}([])], k) : k \textrm{ arbitrary}\}.\]
  Applying \cref{ctp5}, we obtain
  \[\forall \mathcal{C}. \; (p', m, \ell[w := a_i], [], []) \DownarrowC \{([\verb|out | \mathcal{A}_p(\mathcal{C})([])], k) : k \textrm{ arbitrary}\}.\]
  Setting \(\mathcal{C} := \mathcal{A}\) in the proposition above, we obtain
  \begin{equation}\label{eqone}(p', m, \ell[\mathtt{w} := a_i], [], []) \DownarrowA \{([\verb|out | \mathcal{B}([])], k) : k \textrm{ arbitrary}\}.\end{equation}
  However, it is clear from looking at \(p'\) that
  \begin{equation}\label{eqtwo}(p', m, \ell[w := a_i], [], []) \DownarrowA \{([\verb|out | \mathcal{A}([\leak a_i])], k) : k \textrm{ arbitrary}\}.\end{equation}
  Now, setting \(i = 1\) in \cref{eqone} and \cref{eqtwo}, we infer that \(\mathcal{A}([\leak a_1]) = \mathcal{B}([])\).
  Similarly setting \(i = 2\), we get \(\mathcal{A}([\leak a_2]) = \mathcal{B}([])\).
  By transitivity, \(\mathcal{A}([\leak a_1]) = \mathcal{A}([\leak a_2])\),
  contradicting our definition of \(\mathcal{A}\).
\end{proof}
This proof exploited exactly the issue discussed in \cref{predictors_good}.
The compiler moved a nondeterministic event past a deterministic event, and thus it violated the assumption of the source-language semantics that the value of \verb|x| depends only on events occurring (in the source program) before the allocation of \verb|x|.
Note that the proof used only the existence of \(\mathcal{A}_p\) and not the existence of \(\gamma_p\); thus we have shown that \(C\) fails to preserve source-language semantics according to \(\DownarrowA\), even without considering how it transforms leakage traces.
\par Two objections may come to mind when viewing this example.
First, perhaps the issue only arises because we unnecessarily allow random-number-picking to depend on loads. 
But of course we could have a more complicated example, where the nondeterministic event is nondeterministic evaluation order for an expression, or nondeterministic memory allocation, etc.; and some arbitrarily complicated subcomputation that should be allowed to affect the outcome of the nondeterministic event (e.g, a loop with a variable number of executions) takes the place of the load.
\par Second, it can be observed that it is possible in principle to have an optimization pass that fails to satisfy \cref{ctp5}, when nevertheless the composition of all the phases satisfies \cref{ctp5}.
However, this constraint applies regardless of the target language---even when the target program is deterministic.
It significantly constrains how the target program chooses outcomes of events that are nondeterministic in the source language.
\par We therefore conclude that \cref{ctp5} imposes a real practical limitation.

\subsection{Compilers Can Perform Reordering Optimizations and Still Admit Predictor-Transformation Functions}
In this subsection we demonstrate that, indeed, a compiler can perform the program transformation of \cref{sec:optimization} and nontheless admit predictor-transformation functions.
\par More concretely: let \(C\) be a compiler pass with \(C(p) = p'\) (with \(p, p'\) as in \cref{sec:optimization}).
We will exhibit the predictor-transformation function \(\gamma_p\) as in \cref{ctp3}.
\par Given \(\mathcal{P}\), define \(\mathcal{P}' := \gamma_p(\mathcal{P})\) as follows.
Recall what \cref{ctp3} requires of us: if \(k\) is a possible trace of \(p\) such that \(k \in \mathcal{P}\), then the corresonding trace \(k'\) of \(p'\) should satisfy \(k' \in \mathcal{P}'\).
When the leakage of the source program is of the form \([\branch x; \leak w]\), the corresponding leakage of the target program will be \([\leak w; \branch x]\).
So, we set \(\mathcal{P}'([]) := \mathcal{P}([\branch 0])\).
Note that we can just assume that the branch taken is \(0\) because the value \(w\) that is leaked is \emph{independent of the branch \(x\)}.
\par The rest of defining \(\mathcal{P}'\) is straightforward.
For any \(w\), we set \(\mathcal{P}'([\leak w]) := \branch\), and for any \(w,x\) we set \(\mathcal{P}'([\leak w; \branch x]) := \en\).
On other inputs it does not matter what value \(\mathcal{P}'\) takes.
Now we have defined \(\mathcal{P}' = \gamma_p(\mathcal{P})\), and clearly \(\gamma_p\) satisfies the constraint of \cref{ctp3}.

\bibliographystyle{ACM-Reference-Format}
\bibliography{paper}

\end{document}